\newlength\xvec@height%
\newlength\xvec@depth%
\newlength\xvec@width%
\newcommand{\xvec}[2][]{%
  \ifmmode%
    \settoheight{\xvec@height}{$#2$}%
    \settodepth{\xvec@depth}{$#2$}%
    \settowidth{\xvec@width}{$#2$}%
  \else%
    \settoheight{\xvec@height}{#2}%
    \settodepth{\xvec@depth}{#2}%
    \settowidth{\xvec@width}{#2}%
  \fi%
  \def\xvec@arg{#1}%
  \def\xvec@dd{:}%
  \def\xvec@d{.}%
  \raisebox{.2ex}{\raisebox{\xvec@height}{\rlap{%
    \kern.05em
    \begin{tikzpicture}[scale=1]
    \pgfsetroundcap
    \draw (.05em,0em)--(\xvec@width-.05em,0em);
    \draw (\xvec@width-.05em,0em)--(\xvec@width-.25em,.15em);
    \draw (\xvec@width-.05em,0em)--(\xvec@width-.25em,-.15em);
    \ifx\xvec@arg\xvec@d%
      \fill(\xvec@width*.45,.5ex) circle (.5pt);%
    \else\ifx\xvec@arg\xvec@dd%
      \fill(\xvec@width*.30,.5ex) circle (.5pt);%
      \fill(\xvec@width*.65,.5ex) circle (.5pt);%
    \fi\fi%
    \end{tikzpicture}%
  }}}%
  #2%
}
\renewcommand{\vec}[1]{\xvec[]{#1}}
\newcommand{\np}{{\em NP}\xspace}
\newcommand{\nphard}{\np-hard\xspace}
\newtheorem{theorem}{Theorem}[section]
\newtheorem{lemma}[theorem]{Lemma}
\newtheorem{claim}[theorem]{Claim}
\theoremstyle{definition}  }
\newenvironment{proofof}[1]{\begin{proof}[Proof of #1]}{\end{proof}}
\newcommand{\bg}[1]{\medskip\noindent{\it #1}}
\newcommand{\R}{\ensuremath{\mathbb R}}
\newcommand{\I}{\ensuremath{\mathcal I}}
\newcommand{\Pc}{\ensuremath{\mathcal P}}
\newcommand{\OPT}{\ensuremath{\mathsf{OPT}}}
\newcommand{\ceil}[1]{\ensuremath{\left\lceil#1\right\rceil}}
\newcommand{\poly}{\operatorname{poly}}
\newcommand{\junk}[1]{}
\newcommand{\sse}{\subseteq}
\newcommand{\bx}{\ensuremath{\overline{x}}}
\newcommand{\hx}{\ensuremath{\hat x}}
\newcommand{\sg}{\ensuremath{\sigma}}
\newcommand{\ve}{\ensuremath{\varepsilon}}
\newcommand{\gm}{\ensuremath{\gamma}}
\newcommand{\ld}{\ensuremath{\lambda}}
\newcommand{\al}{\ensuremath{\alpha}}
\newcommand{\w}{\ensuremath{\omega}}
\newcommand{\kp}{\ensuremath{\kappa}}
\newcommand{\lb}{\ensuremath{\mathsf{lb}}}
\newcommand{\optl}{\ensuremath{\mathit{opt}_\ell}}
\newcommand{\iopt}{\ensuremath{O^*}}
\newcommand{\alg}{\ensuremath{\mathsf{alg}}}
\newcommand{\est}{\ensuremath{\mathsf{est}}}
\newcommand{\load}{\ensuremath{\mathsf{load}}}
\newcommand{\lvec}{\ensuremath{\vec{\load}}}
\newcommand{\mnp}{q}
\newcommand{\sgr}{\ensuremath{d}}
\newcommand{\hsgr}{\ensuremath{\widehat{\sgr}}}
\newcommand{\inpsize}{\ensuremath{\text{input size}}\xspace}
\newcommand{\vL}[1][]{\ifthenelse{\equal{#1}{}}{\ensuremath{\vec{L}}}{\ensuremath{\vec{L({#1})}}}}
\newcommand{\vP}[1][]{\ifthenelse{\equal{#1}{}}{\ensuremath{\vec{P}}}{\ensuremath{\vec{P({#1})}}}}
\newcommand{\topl}{\ensuremath{\mathsf{Top}\text{-}\ell}\xspace}
\newcommand{\pset}{\ensuremath{\mathsf{POS}}}
\newcommand{\minmax}[1][m]{{#1}in-max\xspace}
\newcommand{\minlb}{\ensuremath{\mathsf{MinNormLB}}\xspace}
\newcommand{\bsg}{\ensuremath{\overline{\sg}}}
\newcommand{\bo}{\ensuremath{0}} 
\title{Simpler and Better Algorithms for Minimum-Norm Load Balancing}
\author{
         Deeparnab Chakrabarty\thanks{{\tt deeparnab@gmail.com}.
         Dept. of Computer Science, Dartmouth College, Hanover, NH 03755-3510, USA.}
\and
         Chaitanya Swamy\thanks{{\tt cswamy@uwaterloo.ca}.
         Dept. of Combinatorics and Optimization, Univ. Waterloo, Waterloo, ON N2L 3G1,
         Canada. 
         Supported in part by NSERC grant 327620-09 and an NSERC Discovery Accelerator
         Supplement Award.}
}
\date{}
\begin{document}

\maketitle

\begin{abstract}
Recently, Chakrabarty and Swamy (STOC 2019) introduced the 
{\em minimum-norm load-balancing} problem on unrelated machines, wherein we are given a
set $J$ of jobs that need to be scheduled on a set of $m$ unrelated machines, and a
monotone, symmetric norm; 
We seek an assignment $\sg:J\mapsto[m]$ that minimizes the norm of the resulting load
vector $\lvec_\sg\in\R_+^m$, where $\lvec_\sg(i)$ is the load on machine $i$ under the
assignment $\sg$. Besides capturing all $\ell_p$ norms, symmetric norms also capture other
norms of interest including top-$\ell$ norms, and ordered norms.
Chakrabarty and Swamy (STOC 2019) give a $(38+\ve)$-approximation algorithm for
this problem via a general framework they develop for minimum-norm optimization that
proceeds by first carefully reducing this problem (in a series of steps) to a problem
called \minmax ordered load balancing, and then devising a so-called deterministic
oblivious LP-rounding algorithm for ordered load balancing.

We give a direct, and simple $4$-approximation algorithm%
\footnote{Since the norm could be irrational, the factor is really $4+\ve$, achieved in
time $\poly\bigl(\text{input size},\log(1/\ve)\bigr)$.}
	for the minimum-norm load balancing based on rounding a (near-optimal) solution to a novel
convex-programming relaxation for the problem. Whereas the natural convex program encoding
minimum-norm load balancing problem has a large non-constant integrality gap, we show that
this issue can be remedied by including a key constraint that bounds the 
``norm of the job-cost vector.''
Our techniques also yield a (essentially) $4$-approximation for:
(a) {\em multi-norm load balancing}, wherein we are given multiple monotone symmetric
norms, and we seek an assignment respecting a given budget for each norm;
(b) the best {\em simultaneous approximation factor} achievable for all symmetric norms
for a given instance. 
\end{abstract}


\section{Introduction} \label{intro}
In the {\em minimum-norm load-balancing} (\minlb) problem, we are given a set $J$ of $n$ jobs,
a set of $m$ machines, and processing times $p_{ij} \ge 0$ for all $i\in[m]$ and 
$j\in J$. We use $[m]$ to denote $\{1,\ldots,m\}$. 
We are also given a monotone, symmetric norm $f:\R^m \to \R_+$.
Recall that by definition of norm, this means that $f$ satisfies:  
(i) $f(x)=0$ iff $x=0$; 
(ii) $f(x+y)\leq f(x)+f(y)$ for all $x,y\in\R^m$ (triangle inequality); and 
(iii) $f(\ld x)=|\ld|f(x)$ for all $x\in\R^m,\ld\in\R$ (homogeneity). 
(Properties (ii) and (iii) imply that $f$ is convex.)
Monotonicity means that $f(x)\leq f(y)$ for all $x,y\in\R^m$ such that 
$x_i(y_i-x_i)\geq 0$ for all $i\in[m]$; 
symmetry means that permuting the coordinates of $x$ does not affect its
norm, i.e., $f(x)=f\bigl(\{x_{\pi(i)}\}_{i\in[m]}\bigr)$ for every $x\in\R^m$ and every 
permutation $\pi:[m]\mapsto[m]$. 

The goal is to find an assignment $\sg: J\to[m]$ that minimizes the norm (under $f$) of
the induced load vector. More precisely, an assignment $\sg$ induces the $m$-dimensional
load vector $\lvec_\sg \in \R^m_+$ where  
$\lvec_\sg(i) := \sum_{j: \sg(j) = i} p_{ij}$. The objective is to find $\sg$ that
minimizes $f(\lvec_\sg)$.

Besides $\ell_p$-norms, monotone symmetric norms capture \topl norms---sum of $\ell$
largest coordinates in absolute value---and ordered
norms (which are a nonnegative linear combination of \topl norms). 
The minimum-norm load-balancing problem was recently introduced by Chakrabarty and
Swamy~\cite{ChakrabartyS19}. They develop a general framework for minimum-norm
optimization problems based on reducing the problem to a special case called min-max
ordered optimization, and devise a so-called deterministic oblivious
rounding~\cite{ChakrabartyS19} to tackle the latter problem, which results in
a $(38+\ve)$-approximation algorithm for \minlb.

Our main result is a simpler $4(1+\ve)$-approximation algorithm for \minlb
that runs in time $\poly\bigl(\text{input size},\log(\frac{1}{\ve})\bigr)$. 

\begin{theorem} \label{4apx}
One can achieve a $4(1+\ve)$-approximation for \minlb in time 
$\poly\bigl(\text{input size},\log(\frac{1}{\ve})\bigr)$, assuming we have a value-oracle  
and subgradient-oracle for the norm $f$. 

More generally, if we have $\w$-approximate value- and
subgradient- oracles for $f$ (see Section~\ref{convsolve}), then one can compute
a $4(1+5\w)(1+\ve)$-approximation to \minlb in time
$\poly\bigl(\text{input size},\log(\frac{1}{\ve})\bigr)$.
\end{theorem}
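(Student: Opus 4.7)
My plan is to binary-search on a guess $B$ for $\OPT = f(\vec L_{\sg^*})$---losing only a $(1+\ve)$ factor using $O(\log(1/\ve))$ guesses---and for each $B$ solve a novel convex-programming (CP) relaxation and round its near-optimal fractional solution. The variables are $x_{ij}\ge 0$ with $\sum_i x_{ij}=1$, and the fractional load is $\vec L(x)\in\R^m$ with $L_i(x)=\sum_j p_{ij}x_{ij}$. The natural CP that simply minimizes $f(\vec L(x))$ has unbounded integrality gap (a single job with a huge $p_{ij}$ can be spread thinly over all machines). The key fix is to add a constraint that bounds an appropriate norm of a ``job-cost vector'' $\vec c(x)$; as a first attempt I would try $c_j(x):=\sum_i p_{ij} x_{ij}$ (the expected processing time of job $j$) together with a constraint $\tilde f(\vec c(x))\le B$, where $\tilde f$ canonically extends the monotone symmetric $f$ to $\R^n$ by evaluating $f$ on the top-$m$ entries (padding with zeros).

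First I would check feasibility at $B=\OPT$. For the integral cost $c_j:=p_{\sg^*(j),j}$, a weak-majorization argument gives $\tilde f(\vec c_{\sg^*})\le f(\vec L_{\sg^*})$: the top-$k$ entries of $\vec c_{\sg^*}$ come from jobs lying on at most $k$ distinct machines, so their total is bounded by the top-$k$ load-sum---valid for any monotone symmetric $f$. Next, for the rounding, I would adapt the Shmoys--Tardos scheme on the sparse bipartite support of an extreme-point solution $x$: this produces an integral $\sg$ with $L_\sg(i)\le L_x(i)+\al_i$, where $\al_i$ is at most the largest processing time of a job fractionally assigned to machine $i$. By subadditivity and monotonicity of $f$,
\[
f(\vec L_\sg)\le f(\vec L_x)+f(\vec\al) \le B + f(\vec\al),
\]
so the main job becomes bounding $f(\vec\al)$ by $3B$ using the CP constraint. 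I would aim to show that the $\al_i$-vector is weakly majorized (up to a small constant) by a sub-multiset of the $c_j(x)$-values, because each $\al_i$ is chargeable to a fractional job on $i$ and each job can be charged only a bounded number of times; invoking $\tilde f(\vec c(x))\le B$ then yields $f(\vec\al)\le 3B$ and the overall $4$-approximation.

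To solve the CP I would use the ellipsoid method (or an analytic-center cutting-plane method) driven by the given value- and subgradient-oracles for $f$; the constraint on $\tilde f$ is handled by sorting and re-using those oracles. This yields an $(1+\ve)$-approximate solution in $\poly(\inpsize,\log(1/\ve))$ time. With $\w$-approximate oracles, the CP-solver's iterates see objective and constraint values distorted by a $(1+\w)$ factor; propagating this multiplicatively through the feasibility test for $B$, the near-optimality of $x$, and the rounding gives the stated $4(1+5\w)(1+\ve)$ bound.

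The hard part will be calibrating $\vec c(x)$ and the norm-bound constraint so that both (a) the integral optimum satisfies it, and (b) the rounding's extra-load vector $\vec\al$ is weakly majorized (up to a small constant) by $\vec c(x)$. The expected-cost definition satisfies (a) immediately but may fail (b), since fractionally spreading a heavy job can shrink $c_j(x)$ while the rounding still incurs the original $p_{ij}$'s. I would explore fixes such as (i) a Lenstra--Shmoys--Tardos-style restriction $x_{ij}=0$ whenever $p_{ij}$ is much larger than $B$; (ii) redefining $\vec c$ using contributions $p_{ij}x_{ij}$, or a convex surrogate of $p_{ij}\cdot\mathbb 1[x_{ij}>0]$; or (iii) a composite constraint controlling both expected and worst-case job costs simultaneously. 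Getting the rounding's extra-load vector to be dominated in the symmetric-norm sense by the constrained job-cost vector---and pinning the constant down to $4$---is the crux of the argument.
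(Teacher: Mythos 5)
You have correctly identified the paper's convex program: your job-cost vector $\vec{c}(x)$ with $c_j(x)=\sum_i p_{ij}x_{ij}$ is exactly the paper's $\vec{P}$, and the constraint that the top-$m$ entries of $\vec{c}(x)$ have $f$-norm at most the target is exactly constraint~\eqref{eq:jobs}. Your weak-majorization argument for validity is essentially the content of Theorem~\ref{thm:valid} (the paper implements the merging steps via Claim~\ref{clm:simple}), and Shmoys--Tardos GAP rounding is also the right tool. But the core step---how to bound the norm of the extra-load vector $\vec{\al}$ (the paper's $\vec{Z}$)---is left open; you call it the crux and list candidate fixes without committing. This is a genuine gap, not a detail to be filled in.

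Moreover, your candidate fix (i), setting $x_{ij}=0$ whenever $p_{ij}$ is much larger than $B$, is a \emph{global} threshold, and it does not work for a general symmetric norm: all $m$ machines could each receive one job with $Z_i\approx B$, so for $f=\ell_1$ you would have $f(\vec{Z})\approx mB$, and nothing in the relaxation controls this because $\vec{Z}$ is not comparable to any $\vec{P}_S$. The paper's fix is a \emph{per-job} threshold relative to the job's own fractional cost $P_j$: set $\hx_{ij}=2x_{ij}$ if $p_{ij}\le 2P_j$ and $\hx_{ij}=0$ otherwise. A Markov argument shows $\hx$ still assigns each job fully. After Shmoys--Tardos on $\hx$, we get $\load_\sg(i)\le 2L_i+Z_i$ with $Z_i\le 2P_{j_i}$ where $j_i$ is the longest job on $i$; the set $S=\{j_i\}$ has at most $m$ elements, so $\vec{Z}\le 2\vec{P}_S$, and constraint~\eqref{eq:jobs} gives $f(\vec{Z})\le 2T$, hence $f(\lvec_\sg)\le 2T+2T=4T$. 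Note the decomposition is $2T+2T$, not $B+3B$: the filtering necessarily inflates the fractional loads by a factor $2$, and then one needs only $f(\vec{Z})\le 2T$, which the per-job filtering delivers. Trying to avoid the blowup on $f(\vec{L})$ and pay $3B$ for $\vec{\al}$ instead would require bounding $Z_i$ without filtering, which fails. Finally, the binary search on $B$ is unnecessary: the paper keeps $T$ as a variable, minimizes directly via the ellipsoid method (Theorem~\ref{thm:cpsolve}), and converts the additive ellipsoid error to a multiplicative one using the lower bound and Lipschitz estimate of Lemma~\ref{gprops}.
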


This is a substantial improvement
over the approximation factor of $38$ obtained in~\cite{ChakrabartyS19}. Moreover, our
algorithm is also simpler and more direct than the one in~\cite{ChakrabartyS19}. 
Notably, our approximation factor is close to the best-known approximation factor (of $2$)
known for the $\ell_\infty$ norm  (wherein \minlb becomes the classical minimum-makespan
problem).  
Our algorithm proceeds by rounding the
solution to a novel convex-programming relaxation of the problem. 
The convex program can be solved (approximately) using an (approximate) first-order
oracle for $f$ that returns the function value, and its subgradient at a given point.

Our techniques also yield a $4(1+\ve)$-approximation for (see Section~\ref{extn}): 
(a) {\em multi-norm load balancing},
wherein we are given multiple monotone, symmetric norms and budgets for each norm, and we
seek an assignment (approximately) respecting these budgets; and
(b) the best {\em simultaneous approximation factor} achievable for all symmetric norms
for a given instance.

\paragraph{Motivation and perspective.} 
One of the reasons for studying \minlb is that it generalizes various load-balancing
problems considered in the literature, and its study therefore yields a unified
methodology for dealing with monotone, symmetric norms.  

Load balancing under the $\ell_\infty$ norm, that is, minimizing the maximum load
(also called the makespan) is a classical scheduling problem that 
has been extensively
studied~\cite{LenstraST90,ShmoysT93,EbenlendrKS14,Svensson12,ChakrabartyKL15,JansenR17}
over the past three decades, both in its full generality for unrelated machines and for
various special cases.  
The best known approximation factor for
the unrelated-machines setting is still $2$~\cite{LenstraST90}, and it is \nphard to
obtain an approximation factor better than $3/2$~\cite{LenstraST90}.
For general $\ell_p$-norms,  
Azar and Epstein~\cite{AzarE05} obtain a $2$-approximation, and 
improved guarantees have been obtained for constant 
$p$~\cite{AzarE05,KumarMPS09,MakarychevS14}.
More recently, the load-balancing problem has also been considered for other monotone,
symmetric norms. \topl- and ordered- norms have been proposed in the location-theory
literature (see ``Other related work'') as a means of interpolating between the $\ell_1$
and $\ell_\infty$ norms (and an alternative to using $\ell_p$ norms), and motivated
by this, 
Chakrabarty and Swamy~\cite{ChakrabartyS19} studied  
the \topl load-balancing problem---minimize the total load on the $\ell$ most loaded
machines---and the ordered load-balancing problem. 
They give a $(2+\ve)$-approximation algorithm in both settings, and also (as noted
earlier) devise a $(38+\ve)$-approximation algorithm for an arbitrary monotone, symmetric
norm.    

For load balancing, there has been considerable interest in 
{\em simultaneous optimization}. Given an instance, the objective is to find an 
assignment that 
{\em simultaneously} approximates a large suite of objective functions. Building upon previous
works~\cite{AlonAWY98,AzarERW04}, Goel and Meyerson~\cite{GoelM06} describe a
$2$-approximation for the problem of simultaneously approximating all monotone symmetric
norms in the {\em restricted assignment} setting. However, it is known that such an
$O(1)$-factor is {\em impossible} in the unrelated-machines setting~\cite{AzarERW04,GoelM06}.
As a byproduct of their \minlb algorithm, in the unrelated-machines setting, Chakrabarty
and Swamy~\cite{ChakrabartyS19} give an {\em instance-wise} 
$(38 + \ve)$-approximation to the best simultaneous approximation-factor possible for the
instance. To elaborate,  let $\alpha^*_\I$ denote the smallest factor for instance $\I$ such
that there exists a schedule that achieves an $\alpha^*_\I$-approximation for all
monotone, symmetric norms;  
the work of \cite{ChakrabartyS19} returns a schedule for $\I$ that achieves a
$38(1+\ve)\alpha^*_\I$-approximation for all monotone, symmetric norms.  
As mentioned above, we devise an algorithm that for every instance $\I$ returns a schedule
that simultaneously achieves a $\bigl(4+O(\ve)\bigr)\alpha^*_\I$-approximation for all
monotone, symmetric norms (see Theorem~\ref{thm:simul-opt}).

\paragraph{Our techniques.}
Since a norm is a convex function, a natural convex-programming relaxation for \minlb is
to minimize the norm of the fractional load vector 
$\vec{L}=\vec{L(x)}:=\bigl\{\sum_j p_{ij}x_{ij}\bigr\}_{i\in[m]}$, where the $x_{ij}$s are
the usual variables denoting if job $j$ is assigned to machine $i$, and we have the usual
job-assignment constraints encoding that every job is assigned to some
machine. This convex program, however, has a large integrality gap, even when $f$ is the
$\ell_\infty$-norm due to the issue that the convex program
could split a large job across multiple machines. 

In the case of the $\ell_\infty$ norm (the makespan minimization problem), the typical way
of circumventing the above issue is to 
``guess'' the optimal value, say $T$, and add constraints to encode that no single job 
contributes more than $T$ to the objective. The usual way of
capturing this is to explicitly set $x_{ij}=0$ 
if $p_{ij}>T$. 
A less common, and weaker, way of encoding this is to enforce that 
$\sum_i p_{ij}x_{ij}\leq T$ for all $j$, that is, the total processing time contribution
of any job $j$ across the 
machines cannot exceed $T$.

For an arbitrary (monotone, symmetric) norm, it is unclear how to extend
either of the above approaches, since the contribution of a job to the objective is a
now a somewhat vague notion.
One way to generalize things would be to encode (either explicitly or in the
alternate weaker sense above) that the ``norm'' of the job-cost vector is at
most $T$, where the job-cost vector is indexed by jobs and the cost for job $j$ (under
$x$) is $P_j:=\sum_i p_{ij}x_{ij}$. But the norm $f$ is defined over $\R^m$, whereas the
job-cost vector lies in $\R^n$. For certain specific (families of) norms---e.g.,
$\ell_p$-norms, top-$\ell$ norm, ordered norm---there is a natural version of the norm
over $\R^n$,%
\footnote{For $\ell_p$-norms, a variant of this that considers the $\ell_p^p$
  expression does work, but this crucially exploits the separability of
  $\ell_p^p$~\cite{AzarE05}.}  
but what does such a constraint mean in general, and how can one encode this? 

Our key insight, which leads to our convex program, is that one can capture the
above consideration by examining the vector ${\vec{P}}\in\R^m$ comprising the
{\em costs of the $m$ most-costly jobs} 
and enforcing the constraint $f(\vec{P})\leq T$; 
since $f$ is monotone, this can be equivalently encoded as
$f\bigl(\{P_j\}_{j\in S}\bigr)\leq T$ for all $S\sse J$ with $|S|=m$.
It is not apparent that such a constraint is valid, but we derive some insights about
symmetric norms and show that this is indeed the case (see Theorem~\ref{thm:valid}). This yields
our convex program \eqref{eq:cp}, which  
can be solved efficiently (within $\ve$ additive error, for any $\ve>0$, in time 
$\poly\bigl(\text{input size},\ln(1/\ve)\bigr)$) using the ellipsoid method provided we
have a value oracle and subgradient oracle for $f$. 

Rounding a solution $x$ to the convex program is now quite easy. Let $\vec{L}\in\R^m$
denote the load-vector arising from $x$. We use a filtering step to ensure
that each job $j$ is only assigned to machines $i$ for which $p_{ij}\leq 2P_j$. This
causes a factor-$2$ blowup in the machine loads. Now we use the rounding algorithm of
Shmoys and Tardos~\cite{ShmoysT93} for the generalized assignment problem (GAP). The
resulting assignment $\sg$ has load-vector at most $2\vec{L}+\vec{Z}$, where
$\vec{Z}\in\R^m$ and $Z_i=\max_{j:\sg(j)=i}p_{ij}$; the filtering step and our
constraints ensure that $f(\vec{Z})\leq 2T$, so $f(2\vec{L}+\vec{Z})\leq 4T$.
Our algorithm is much more direct than the one in~\cite{ChakrabartyS19}: it avoids
the sequence of steps (and the associated approximation-factor losses) used
in~\cite{ChakrabartyS19}, wherein \minlb is reduced to a special case, called \minmax
ordered load balancing, which is then tackled by a deterministic oblivious rounding
procedure.

\paragraph{Other related work.} 
The algorithmic problem of finding minimum-norm solutions 
has also been investigated in the context of {\em $k$-clustering}, wherein the goal is to
open $k$ ``facilities'' in a metric space to serve a set of clients, and the cost vector 
induced by a solution
is the vector of distances of clients to their nearest open facility.
The setting of $\ell_p$-norms, especially when $p\in\{1,2,\infty\}$ (where the
problem is called the $k$-\{median,\,means,\,center\} problem) has been
extensively studied, and $O(1)$-approximations are known in these
settings~\cite{HochbaumS85a,CharikarGST02,JainV01,AhmadianNSW17}. 
\topl and ordered norms have been proposed in the context of $k$-clustering in the
Operations Research literature (see, e.g.,~\cite{NickelP05,LaporteNdG15}), but
constant-factor approximations for these norms 
were obtained quite recently~\cite{ByrkaSS18,ChakrabartyS18,ChakrabartyS19}.
Furthermore, Chakrabarty and Swamy~\cite{ChakrabartyS19} utilize their general framework
to obtain an $O(1)$-approximation for the $k$-clustering problem under any monotone,
symmetric norm. 
We do not know of any alternate approach that works in the $k$-clustering setting.

\section{A convex-programming relaxation} \label{convprog}
By possibly adding dummy jobs with zero processing times, we may assume without loss of
generality that $n\geq m$. 
A natural convex program for \minlb has non-negative variables $x_{ij}$ denoting if job $j$ is
assigned to machine $i$ (or the extent of $j$ assigned to $i$) with the constraint
\eqref{jasgn} encoding that every job is assigned to a machine.
These $x$-variables 
define a load vector $\vec{L}=\bigl(L_i=L_i(x)\bigr)_{i\in[m]}$ where $L_i(x) = \sum_{j\in J} p_{ij}x_{ij}$.
The objective seeks to minimize
$T:=f\bigl(\vec{L}\bigr)$.
As noted earlier, this convex program has a large integrality gap (even when
$f$ is the $\ell_\infty$ norm). 
We {\em strengthen} the convex program as follows. 

Given the $x$-assignment, define $P_j = P_j(x):=\sum_i p_{ij}x_{ij}$, which is the load
incurred by the fractional solution for scheduling job $j$. 
Fix any subset $S\sse J$ with $|S| = m$. Note that this is well-defined since we have assumed $n\geq m$.
This defines the $m$-dimensional vector $\vec{P}_S:=\{P_j\}_{j\in S}$. We add the
constraints \eqref{eq:jobs} enforcing that
$f(\vec{P}_S) \leq T$ for each such subset $S$. 
Throughout, we use $i$ to index the machines in $[m]$, and $j$ to index the jobs in $J$.
\begin{alignat}{3}
\min & \quad & T & \tag{CP} \label{eq:cp} \\
\text{s.t.} & \quad & \sum_i x_{ij}  & \ge 1 \qquad && \forall j\in J \label{jasgn} \\[-0.75ex]
&& x & \geq 0 \label{nonneg} \\
&& L_i & = \sum_{j\in J} p_{ij}x_{ij} \qquad && \forall i\in [m] \label{mcload} \\
&& P_j & = \sum_{i\in[m]} p_{ij}x_{ij} \qquad && \forall j\in J \label{jbload} \\
&& f(\vec{L}) & \leq T \label{eq:machines}\\
&& f(\vec{P}_S) & \leq T \qquad && \forall S \subseteq J: |S| = m \label{eq:jobs}
\end{alignat}
Let $\OPT:=\OPT_{\text{\ref{eq:cp}}}$ denote the optimal value of \eqref{eq:cp}, and let
$\iopt$ be the optimal value of the minimum-norm load-balancing problem. 
Since the $x_{ij}$-variables completely determine a solution to \eqref{eq:cp}, we will
sometimes abuse notation and say that $x$ is a feasible solution to \eqref{eq:cp}.
We argue that \eqref{eq:cp} is a valid relaxation. The proof uses the following simple
observation about symmetric convex functions.

\begin{claim}\label{clm:simple}
Let $h:\R^m \to \R$ be a symmetric convex function.
Let $v\in \R^m_+$, and $i,j\in [m]$. Let $w\in\R^m_+$ be the vector where 
$w_i = v_i+v_j$, $w_j = 0$, and $w_k = v_k$ otherwise. Then, $h(v) \leq h(w)$. 
\end{claim}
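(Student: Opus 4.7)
The plan is to exploit the symmetry and convexity of $h$ simultaneously, by writing $v$ as a convex combination of two permutations of $w$. Concretely, let $w'\in\R^m_+$ be obtained from $w$ by swapping coordinates $i$ and $j$, i.e.\ $w'_j=v_i+v_j$, $w'_i=0$, and $w'_k=v_k$ for $k\neq i,j$. Since $h$ is symmetric, applying the transposition $(i\ j)$ gives $h(w)=h(w')$.

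Next I would verify that $v$ lies on the segment joining $w$ and $w'$. Assuming $v_i+v_j>0$ (the case $v_i=v_j=0$ gives $w=v$ and is trivial), set $\lambda:=v_j/(v_i+v_j)\in[0,1]$. Then a direct check shows
\begin{align*}
(1-\lambda)w_i+\lambda w'_i &= (1-\lambda)(v_i+v_j) = v_i, \\
(1-\lambda)w_j+\lambda w'_j &= \lambda(v_i+v_j) = v_j,
\end{align*}
while for $k\neq i,j$ we have $(1-\lambda)w_k+\lambda w'_k=v_k$. Hence $v=(1-\lambda)w+\lambda w'$.

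Finally, I would apply convexity of $h$ to conclude
\[
h(v) \;\leq\; (1-\lambda)\,h(w) + \lambda\,h(w') \;=\; h(w),
\]
using $h(w)=h(w')$ in the last equality. This finishes the argument.

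There is no real obstacle here: the only thing to notice is that the symmetric pair $\{w,w'\}$ straddles $v$ in exactly the right way, so that symmetry upgrades the convexity inequality into the desired monotonicity-under-consolidation statement. The one minor point worth flagging is the degenerate case $v_i=v_j=0$, which must be handled separately (trivially) since $\lambda$ is not defined there.
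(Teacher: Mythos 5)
Your proof is correct and is essentially identical to the paper's: both define the swapped vector $w'$, write $v=\frac{v_i}{v_i+v_j}w+\frac{v_j}{v_i+v_j}w'$, and invoke symmetry ($h(w)=h(w')$) together with convexity. You additionally flag the degenerate case $v_i=v_j=0$, which the paper leaves implicit.
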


\begin{proof}
Consider the vector $w'$ constructed in a symmetric fashion to $w$: set 
$w'_j = v_i + v_j$, $w'_i = 0$, and $w'_k = v_k$ otherwise. 
Observe that $v$ is a convex combination of $w$ and $w'$ (we have 
$v=\frac{v_i}{v_i+v_j}\cdot w+\frac{v_j}{v_i+v_j}\cdot w'$), and $h(w)=h(w')$ since $h$ 
is symmetric.
By convexity and symmetry, $h(v)\leq\max\bigl\{h(w), h(w')\bigr\}=h(w)$. 
\end{proof}

\begin{theorem}\label{thm:valid}
Constraints \eqref{eq:jobs} are valid, and so for any instance of \minlb, we have
$\OPT\leq\iopt$. 
\end{theorem}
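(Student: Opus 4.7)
The plan is to take any integer optimal assignment $\sg^*: J \to [m]$ of the \minlb instance, and show that the corresponding $0/1$ solution $x^*$, together with $T = f(\vec{L}^*) = \iopt$ (where $\vec{L}^*$ is the integer load vector induced by $\sg^*$), is feasible for \eqref{eq:cp}. Constraints \eqref{jasgn}--\eqref{jbload} and \eqref{eq:machines} are immediate from the definitions; the crux is to verify \eqref{eq:jobs}, i.e., to show that $f(\vec{P}_S) \leq f(\vec{L}^*)$ for every $S \subseteq J$ with $|S|=m$, where for an integer assignment $P_j = p_{\sg^*(j),\,j}$.

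Fix such a subset $S$, and for each machine $i$ let $S_i := \{j \in S : \sg^*(j) = i\}$; note that $\sum_i |S_i| = |S| = m$. Define $w_i := \sum_{j \in S_i} p_{ij}$, giving a machine-indexed vector $\vec{w} \in \R_+^m$. Coordinatewise $w_i \leq L_i^*$, since $S_i$ is a subset of the jobs that $\sg^*$ assigns to $i$; monotonicity of $f$ thus yields $f(\vec{w}) \leq f(\vec{L}^*)$. It therefore suffices to prove $f(\vec{P}_S) \leq f(\vec{w})$, at which point chaining the inequalities gives exactly the desired bound.

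For this last step, I would apply Claim~\ref{clm:simple} repeatedly. For each machine $i$ with $|S_i| \geq 2$, I iteratively merge, among the $|S_i|$ coordinates of $\vec{P}_S$ indexed by jobs in $S_i$, pairs of coordinates: replace one with their sum and set the other to $0$. Each such merge weakly increases $f$ by Claim~\ref{clm:simple}. After doing this for every machine, the resulting vector $\vec{v} \in \R_+^m$ has, for each $i$ with $S_i \neq \emptyset$, a single coordinate equal to $w_i$ together with $|S_i|-1$ zero coordinates. The main delicate step is to check that these $m$ entries agree as a multiset with the entries of $\vec{w}$: the match works precisely because $|S|=m$, so that the $|S_i|-1$ extra zeros contributed per nonempty $S_i$ exactly account for those coordinates of $\vec{w}$ indexed by empty $S_i$'s (where $w_i = 0$ automatically). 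Symmetry of $f$ then gives $f(\vec{v}) = f(\vec{w})$, completing the chain $f(\vec{P}_S) \leq f(\vec{v}) = f(\vec{w}) \leq f(\vec{L}^*) = \iopt$ and establishing both the validity of \eqref{eq:jobs} and the bound $\OPT \leq \iopt$.
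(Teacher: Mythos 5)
Your proposal is correct and follows essentially the same route as the paper: take the indicator solution from an optimal assignment, compare $\vec{P}_S$ to the restricted load vector (your $\vec{w}$, the paper's $\vec{L'}$) by repeated application of Claim~\ref{clm:simple}, then invoke monotonicity. You spell out the counting of zeros a bit more carefully than the paper, but the underlying argument is the same.
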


\begin{proof}
Let $\sg^*:J\to [m]$ be an optimal assignment, so $f(\lvec_{\sg^*}) = \iopt$. We now
describe a feasible solution to \eqref{eq:cp} with $T=\iopt$. 
Set $x_{ij} = 1$ if $\sg^*(j) = i$, and $0$ otherwise. Clearly, constraints \eqref{jasgn}
hold. Note, $L_i = \load_{\sg^*}(i)$ for all $i$, and $P_j = p_{\sg^*(j)j}$ for all $j$. 
Therefore, \eqref{eq:machines} holds with equality. 
	
The interesting bit is to show that~\eqref{eq:jobs} holds. To that end, fix a subset
$S\sse J$ of $m$ jobs. Consider the load vector induced by jobs in $S$. That is, define
$L'_i := \sum_{j\in S:\sg^*(j) = i} p_{ij}$.
Note that $\vec{L}$ coordinate wise dominates $\vec{L'}$, so 
by monotonicity of $f$, we have $f(\vec{L'})\leq f(\vec{L})=T$.  

We argue that $f(\vec{P}_S)\leq f(\vec{L'})$, which will complete the proof. 
To see this, first note that if $\sg^*$
assigns the jobs in $S$ to distinct machines, then $\vec{P}_S$ is simply a 
permutation of $\vec{L'}$, so $f(\vec{P}_S)=f(\vec{L'})$. Otherwise, observe that
$\vec{L'}$ can be obtained from $\vec{P}_S$ by applying the operation in
Claim~\ref{clm:simple} to pairs of jobs in $S$ assigned to the same machine; 
therefore, we have $f(\vec{P}_S)\leq f(\vec{L'})$. 
\end{proof}

The proof above relied only on convexity, monotonicity, and symmetry of the function
$f$. 
In Section~\ref{rounding} (see Theorem~\ref{thm:rnd}) we describe a rounding procedure
which takes a feasible solution for \eqref{eq:cp} and returns an assignment with a
factor-$4$ blow-up in the objective. This will utilize the homogeneity of the norm $f$.  
In Section~\ref{convsolve}, we show how to (approximately) solve \eqref{eq:cp} given
an (approximate) first-order oracle for the underlying norm
(see Theorem~\ref{thm:cpsolve}). Combining these two results yields Theorem~\ref{4apx}.

\section{The rounding algorithm} \label{rounding}
We now describe and analyze our simple rounding algorithm, which yields the
following guarantee. 

\begin{theorem} \label{thm:rnd}
Given a feasible fractional solution
$\bigl(x=\{x_{ij}\}_{i,j},\vec{L},\vec{P},T\bigr)$ to
\eqref{eq:cp}, there is a polynomial time algorithm to obtain a schedule $\sigma$ with 
$f(\lvec_\sg)\leq 4T$.
\end{theorem}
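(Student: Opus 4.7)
The plan is to follow the three-step outline already foreshadowed in the introduction: (i) a filtering step that discards assignments of $j$ to ``bad'' machines where $p_{ij}$ is large relative to $P_j$; (ii) the Shmoys--Tardos~\cite{ShmoysT93} GAP-rounding theorem, which converts a fractional schedule into an integral one at the expense of one ``leftover'' job per machine; and (iii) an application of the novel constraints~\eqref{eq:jobs} to control the norm of the leftover-job vector.

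First I would define, for each job $j$, the set $M_j:=\{i\in[m]: p_{ij}\leq 2P_j\}$. Since $P_j=\sum_i p_{ij}x_{ij}$, a Markov-type argument shows that $\sum_{i\in M_j} x_{ij}\geq\tfrac{1}{2}$. Scaling the surviving $x_{ij}$'s up by a factor of $2$ (and zeroing out the rest) gives a fractional assignment $x'$ in which $j$ is assigned only to machines in $M_j$ and $\sum_i x'_{ij}\geq 1$. The induced fractional loads are $L'_i=\sum_j p_{ij}x'_{ij}\leq 2L_i$.

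Next, I would feed $x'$ into the GAP rounding of Shmoys and Tardos. This produces an integral assignment $\sigma$ with $\lvec_\sg(i)\leq L'_i+\max_{j\,:\,\sg(j)=i}p_{ij}\leq 2L_i+Z_i$, where $Z_i:=\max_{j\,:\,\sg(j)=i}p_{ij}$ (and $Z_i=0$ on idle machines). Crucially, the filtering guarantees that $Z_i\leq 2P_{j_i^*}$ where $j_i^*\in\arg\max_{j\,:\,\sg(j)=i}p_{ij}$. By triangle inequality and absolute homogeneity of $f$, together with constraint~\eqref{eq:machines},
\begin{equation*}
 f(\lvec_\sg)\;\leq\;f(2\vec{L})+f(\vec{Z})\;\leq\;2T+f(\vec{Z}).
\end{equation*}
So everything reduces to proving $f(\vec{Z})\leq 2T$.

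The main obstacle is this last step. Let $S^*:=\{j_i^*:\sg^{-1}(i)\neq\es\}$; these are distinct jobs because $\sigma$ is an assignment. Since $n\geq m$ and $|S^*|\leq m$, I can pad $S^*$ with arbitrary extra jobs from $J\sm S^*$ to form a set $S$ with $|S|=m$, which makes constraint~\eqref{eq:jobs} applicable, yielding $f(\vec{P}_S)\leq T$. Now compare the multisets underlying $\vec{Z}$ and $2\vec{P}_S$: each nonzero coordinate $Z_i$ is bounded by $2P_{j_i^*}$, and the $m-|S^*|$ coordinates of $\vec{Z}$ corresponding to idle machines are zero while the matching coordinates of $2\vec{P}_S$ (those from the padding jobs) are nonnegative. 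Hence there is a bijection between the coordinates under which $\vec{Z}$ is dominated coordinatewise by a permutation of $2\vec{P}_S$; by symmetry and monotonicity of $f$, $f(\vec{Z})\leq f(2\vec{P}_S)=2f(\vec{P}_S)\leq 2T$, and combining with the previous inequality gives $f(\lvec_\sg)\leq 4T$. Every step runs in polynomial time.
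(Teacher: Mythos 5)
Your proof is correct and follows essentially the same three-step approach as the paper's: filtering, GAP rounding via Shmoys--Tardos, then bounding $f(\vec{Z})$ by $2f(\vec{P}_S)$ using constraint~\eqref{eq:jobs} and the symmetry/monotonicity/homogeneity of $f$. You are a bit more explicit than the paper in handling the edge case of idle machines by padding $S^*$ up to size $m$, but this is the same argument.
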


\begin{proof}
First, we filter $x$. For every $i,j$, we set $\hx_{ij} = 2x_{ij}$ if $p_{ij}\leq 2P_j$, 
and $0$ otherwise. A standard Markov-inequality style argument shows that  $\hx$ satisfies \eqref{jasgn}.
Now we apply the Shmoys-Tardos GAP-rounding algorithm~\cite{ShmoysT93} to $\hx$. This yields an
assignment $\sg:J\to[m]$ such that: for every job $j$, we have $\sg(j)=i$ only if
$\hx_{ij}>0$, and
for every machine $i$, we have
$\load_\sg(i)\leq \sum_{j\in J} p_{ij}\hx_{ij} + Z_i \leq 2L_i + Z_i$,
where $Z_i = \max_{j: \sg(j) = i} p_{ij}$. Thus, 
$\lvec_{\sg} \leq 2\vec{L} + \vec{Z}$. 

Let $j_i$ be a maximum-length job assigned to
machine $i$ in $\sg$, i.e., $\sg(j_i)=i$ and $Z_i=p_{ij_i}$. By our filtering step, we
know that $Z_i \leq 2P_{j_i}$. 
Let $S = \{j_i: i\in[m]\}$. Then $\vec{Z}:=(Z_i)_{i\in[m]}\leq 2\vec{P}_S$. 
By monotonicity, the triangle inequality, and
homogeneity of $f$, 
we then obtain that
\begin{equation*}\label{eq:algo}
f(\lvec_{\sg}) \leq 2f(\vec{L}) + f(\vec{Z}) \leq 2T + 2f(\vec{P}_S)\leq 4T. \qedhere
\end{equation*}
\end{proof}

Interestingly, and notably, observe that the rounding procedure above is 
{\em oblivious to the norm $f$}: given a fractional solution $x$, the same rounding
procedure works for all monotone, symmetric norms. This will be useful in
Section~\ref{multinorm}, where we seek an assignment that is simultaneously good for
multiple norms.

\section{Solving the convex program} \label{convsolve}
We now discuss how to solve the convex program~\eqref{eq:cp}. It is well
known~\cite{NemirovskiY76,GrotschelLS88} that we can
efficiently solve a convex program $\min_{x\in S}h(x)$ 
(where $S\sse\R^n$ is convex) 
to within any additive error $\ve>0$ using the
ellipsoid method provided that (we state things more precisely below): (i) $S$ has
non-zero volume and is contained in some ball; (ii) we have a separation oracle for $S$;
(iii) we have a {\em first-order} oracle for $h$ that given input $x\in S$, returns
$h(x)$, and a subgradient of $h$ at $x$. 
More generally, we show that by utilizing the machinery of Shmoys and
Swamy~\cite{ShmoysS06}, even an approximate value and subgradient oracle suffices (see
Theorem~\ref{thm:cpsolve}). 
This is particularly relevant since the norm and/or components of the subgradient vector 
may involve irrational numbers. 

By scaling we may assume that all $p_{ij}$s are integers. Let $\iopt$ denote the optimal
value for the \minlb instance. We can easily detect if $\iopt=0$, since this implies an
assignment with $0$ load on every machine.
Therefore, we assume $\iopt \ge 1$.
It will be convenient to reformulate \eqref{eq:cp} as follows.
Let $\Pc:=\bigl\{x\in\R^{[m]\times J}:\ \sum_i x_{ij} \ge 1\ \ \forall j\in J,\ \ 
0\leq x_{ij}\leq 1\ \ \forall i\in[m],j\in J\bigr\}$
denote the feasible region for the assignment variables.
\begin{equation}
\min \ \ g(x)\ :=\ \max\Bigl\{f\bigl(\vec{L(x)}\bigr),\ \ \max_{S\sse J:|S|=m}f\bigl(\vec{P(x)}_S\bigr)\Bigr\}
\qquad \text{s.t.} \qquad x\in \Pc.
\tag{CP'} \label{newcp}
\end{equation}
Note that the $x_{ij}$s are the only variables above.
Recall that $\OPT$ is the optimal value of \eqref{eq:cp} (and \eqref{newcp}).

We recall a few standard concepts from optimization.
Let $h:\R^k\mapsto\R$ and 
let $\|u\|$ denote the $\ell_2$ norm of $u$.
\begin{enumerate}[label=\textbullet, topsep=0.5ex, itemsep=0ex,
    labelwidth=\widthof{\textbullet}, leftmargin=!]
\item We say that $h$ has {\em Lipschitz constant} (at most) $K$ if 
$|h(v)-h(u)|\leq K\|v-u\|$ for all $u,v\in\R^{k}$.

\item 
We say that $\sgr\in\R^{k}$ is a {\em subgradient} of $h$ at
$u\in\R^{k}$ if we have $h(v)-h(u)\geq\sgr\cdot(v-u)$ for all 
$v\in\R^{k}$. 
We say that $\hsgr$ is an {\em $\w$-subgradient} of 
$h$ 
at $u\in\R^k$ if for every $v\in\R^k$, we have $h(v)-h(u)\geq\hsgr\cdot(v-u)-\w h(u)$; we
call this the approximate-subgradient inequality.

\item 
An {\em $\w$-first-order oracle} for $h$ is an
algorithm that at any point $u\in\R^k$, returns an estimate $\est$ such that
$h(u)\leq\est\leq(1+\w)h(u)$, and an $\w$-subgradient of $h$ at $u$.

(In the optimization literature, the notions of approximate first-order oracle
and approximate subgradient typically involve additive errors; since our problems are 
scale-invariant, multiplicative approximations, where the error at $u$ is measured
relative to $h(u)$, are more apt here.)
\end{enumerate}
\noindent
We remark that since $f$ is a norm, an $\w$-subgradient $\hsgr$ of $f$ at $u$ also
yields an estimate of $f(u)$ as follows: 
taking $v=\vec{0}$ and $v=2u$ respectively in the approximate-subgradient inequality, we
obtain the bounds $\hsgr\cdot u\geq(1-\w)f(u)$ and $\hsgr\cdot u\leq(1+\w)f(u)$.
(Thus, an $\w$-first-order oracle for $f$ boils down to an $\w$-subgradient oracle for
$f$.)

By $\inpsize$, we mean the total encoding length of the $p_{ij}$s.
It is easy to separate over $\Pc$, and easy to find radii $R$, and $0<V\leq 1$ such
that $\Pc\sse B(\bo,R):=\{x: \|x\|\leq R\}$,
$\Pc$ contains a ball of radius $V$, and $\log\bigl(\frac{R}{V}\bigr)=\poly(m,n)$.
In particular, $R = \sqrt{mn}$ suffices, and $\Pc$ contains a ball of radius 
$V=\frac{0.5}{m}$ around the point $x$ with $x_{ij}=\frac{1.5}{m}$ for all $i,j$.
(We may assume $m\geq 2$ as otherwise the problem is trivial.)
Throughout, we use $K_f$ to denote an efficiently-computable upper bound on the Lipschitz
constant of $f$; Lemma~\ref{gprops} shows how to obtain this.
Given a bound on the Lipschitz constant of $f$, one can compute an upper bound on the
Lipschitz constant of $g$. 

\begin{claim} \label{glip}
The Lipschitz constant of $g$ is at most $K=\sqrt{mn}\cdot\max_{i,j}p_{ij}\cdot K_f$. 
\end{claim}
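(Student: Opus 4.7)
The plan is to observe that $g$ is a pointwise maximum of finitely many functions, each of which is the norm $f$ composed with a linear map from $\R^{m \cdot n}$ (the domain of $x$) to $\R^m$. For a function of the form $x \mapsto f(Ax)$, the fact that $f$ is $K_f$-Lipschitz immediately gives Lipschitz constant $K_f \cdot \|A\|_{\text{op}}$ (with respect to the $\ell_2$ norm), where $\|A\|_{\text{op}}$ is the $\ell_2 \to \ell_2$ operator norm. So the plan reduces to bounding the operator norms of the relevant linear maps and then taking the max.

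Concretely, I would write $g(x) = \max\bigl\{h_0(x),\ \max_{S \subseteq J,\ |S|=m} h_S(x)\bigr\}$, where $h_0(x) = f\bigl(\vec{L(x)}\bigr)$ and $h_S(x) = f\bigl(\vec{P(x)}_S\bigr)$. Each of these is $f$ composed with a linear map. For $h_0$, the linear map $A_0$ sends $x$ to the $m$-vector with $i$-th coordinate $\sum_{j} p_{ij} x_{ij}$; for $h_S$, the linear map $A_S$ sends $x$ to the $m$-vector indexed by $j \in S$ with entry $\sum_i p_{ij} x_{ij}$.

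Next, I would bound each operator norm by the Frobenius norm, using $\|A\|_{\text{op}} \le \|A\|_F$. Writing $p_{\max} = \max_{i,j} p_{ij}$, the matrix $A_0$ has entries taken from $\{0, p_{ij}\}$ with exactly $mn$ nonzero entries, so $\|A_0\|_F^2 \le mn\, p_{\max}^2$. The matrix $A_S$ has nonzero entries only for pairs $(i,j')$ with $j' \in S$, giving at most $m\cdot |S| = m^2$ nonzero entries; hence $\|A_S\|_F^2 \le m^2 p_{\max}^2 \le mn \, p_{\max}^2$, using the standing assumption $n \ge m$. Either way, the operator norm is at most $\sqrt{mn}\cdot p_{\max}$, so each of $h_0, h_S$ is Lipschitz with constant $\sqrt{mn}\cdot p_{\max} \cdot K_f = K$.

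Finally, I would invoke the standard fact that a pointwise maximum of $K$-Lipschitz functions is $K$-Lipschitz: for any $u,v$, if $g(u) = h(u)$ for some $h$ in the family, then $g(u) - g(v) \le h(u) - h(v) \le K\|u-v\|$, and symmetrically. This concludes the bound $|g(u)-g(v)| \le K\|u-v\|$. There is no real obstacle here; the only mild care is to use $n \ge m$ to bundle the $A_S$ bound under the same expression $\sqrt{mn}\cdot p_{\max}$, and to note that the maximum over the exponentially many sets $S$ does not increase the Lipschitz constant (only their common Lipschitz bound matters).
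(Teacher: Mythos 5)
Your proof is correct and takes essentially the same route as the paper: decompose $g$ as a pointwise maximum, bound the Lipschitz constant of each constituent $f\circ A$ by $K_f$ times a norm of $A$, and conclude via the max-of-Lipschitz-functions fact. The paper unpacks the $\|A\|_{\mathrm{op}}\le\|A\|_F$ step into an explicit $\ell_1$/Cauchy--Schwarz chain, but the bound and the structure of the argument are the same.
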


\begin{theorem}[Follows from~\cite{NemirovskiY76}; see also~\cite{GrotschelLS88}] 
\label{cvoptexact}
Let $\alg$ be a first-order oracle for $f$. 
Then, for any $\eta>0$, we can compute 
$x^*\in\Pc$ such that $g(x^*)\leq\OPT+\eta$ in 
$\poly\bigl(\inpsize,\log(\frac{K_fR}{\eta V})\bigr)$ time and using 
$\poly\bigl(\inpsize,\log(\frac{K_fR}{\eta V})\bigr)$ calls to 
$\alg$.
\end{theorem}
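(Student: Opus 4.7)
The plan is to apply the standard ellipsoid-based framework for minimizing a convex function over a convex body~\cite{NemirovskiY76, GrotschelLS88} to the equivalent formulation~\eqref{newcp}. I would need to supply three ingredients: (i) a separation oracle for $\Pc$ together with bounds $R,V$ on the enclosing and enclosed balls; (ii) a bound on the Lipschitz constant of the objective $g$; and (iii) a first-order oracle for $g$ built from the first-order oracle $\alg$ for $f$. Items (i) and (ii) are essentially at hand already: $\Pc$ is an explicit polytope with polynomially many facets, the choices $R=\sqrt{mn}$ and $V=\tfrac{1}{2m}$ recorded before the theorem satisfy $\log(R/V)=\poly(m,n)$, and Claim~\ref{glip} furnishes the bound $K=\sqrt{mn}\cdot\max_{i,j}p_{ij}\cdot K_f$ on the Lipschitz constant of $g$.

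The main step is therefore item (iii). Convexity of $g$ is clear: each of the maps $x\mapsto f(\vec{L(x)})$ and $x\mapsto f(\vec{P(x)}_S)$ is a composition of the norm $f$ with a linear map and hence convex, and $g$ is their pointwise maximum. To evaluate $g(x)$, I would compute $\vec{L(x)}$ and all $P_j(x)$ values, feed $\vec{L(x)}$ to $\alg$ to get $f(\vec{L(x)})$, and then address the second term. The subtle point here is that \eqref{eq:jobs} involves exponentially many subsets $S$; the key observation that circumvents this is that by monotonicity and symmetry of $f$, the subset $S^*$ maximizing $f(\vec{P(x)}_S)$ simply consists of the $m$ jobs with the largest $P_j(x)$ values (replacing a coordinate of $\vec{P(x)}_S$ by a larger value can only weakly increase $f$). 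A single sort followed by one more call to $\alg$ thus computes $\max_S f(\vec{P(x)}_S)$, and hence $g(x)$.

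To produce a subgradient of $g$ at $x$, I would take any $S^*$ attaining the maximum in $g(x)$ and, assuming without loss of generality that $g(x)=f(\vec{P(x)}_{S^*})$ (the case $g(x)=f(\vec{L(x)})$ is analogous and simpler), let $\sgr\in\R^m$ be a subgradient of $f$ at $\vec{P(x)}_{S^*}$ returned by $\alg$. Identifying coordinates of $\sgr$ with jobs in $S^*$ via any fixed enumeration, the chain rule for composition with a linear map shows that the vector $\tsgr$ with $\tsgr_{ij}=p_{ij}\sgr_j$ for $j\in S^*$ and $\tsgr_{ij}=0$ otherwise is a subgradient of $x\mapsto f(\vec{P(x)}_{S^*})$ at $x$; since $g$ pointwise dominates this function and agrees with it at $x$, $\tsgr$ is also a subgradient of $g$ at $x$. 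The hardest conceptual part of the proof is precisely this step, engineering both the value and the subgradient past the exponential-size constraint family, which the monotonicity-plus-symmetry observation resolves.

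Feeding the oracle, radii, and Lipschitz bound into the Nemirovski-Yudin/GLS guarantee then yields, in time $\poly(\inpsize,\log(KR/(\eta V)))$ and using as many calls to $\alg$, a point $x^*\in\Pc$ with $g(x^*)\leq\OPT+\eta$. Since $\log K=\log K_f+\tfrac{1}{2}\log(mn)+\log\max_{i,j}p_{ij}$ and the last two summands are $\poly(\inpsize)$, the running-time bound simplifies to the stated $\poly(\inpsize,\log(K_fR/(\eta V)))$.
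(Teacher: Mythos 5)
Your proposal is correct and follows the same route the paper takes: invoke the Nemirovski--Yudin/GLS ellipsoid framework using the separation oracle and $R,V$ bounds for $\Pc$, the Lipschitz bound of Claim~\ref{glip}, and a first-order oracle for $g$ built from $\alg$ via the observation that the maximizing subset $S^*$ is just the $m$ jobs of largest $P_j(x)$. The only organizational difference is that the paper delegates the first-order-oracle construction (and the chain-rule subgradient formula you derive) to Lemma~\ref{gprops}(i), whereas you inline it.
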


Theorem~\ref{cvoptexact} follows from the ellipsoid method for convex optimization, 
due to the bound on the Lipschitz constant of $g$ obtained from Claim~\ref{glip}, and
since one can use $\alg$ to obtain a first-order oracle for $g$.  
We next use~\cite{ShmoysS06} to obtain a stronger result that utilizes only an
approximate first-order oracle for $f$. 

\begin{theorem}[Lemma 4.5 in~\cite{ShmoysS06} paraphrased] \label{sscvopt}
Consider a convex optimization problem: $\min_{x\in\Pc} h(x)$.
Let $K_h$ be a known bound on the Lipschitz constant of $h$. 
Let $\w<1$ and $\eta>0$. 
In $\poly\bigl(m,n,\log(\frac{K_hR}{V\eta})\bigr)$ time
and using $\poly\bigl(m,n,\log(\frac{K_hR}{V\eta})\bigr)$ calls to an $\w$-first-order
oracle for $h$, one can compute a solution $x^*\in\Pc$ such that
$h(x^*)\leq\frac{1+\w}{1-\w}\cdot\bigl(\min_{x\in\Pc}h(x)+\eta\bigr)$.  
\end{theorem}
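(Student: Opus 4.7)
The plan is to adapt the classical ellipsoid method for convex optimization so that it tolerates the multiplicative errors in the approximate first-order oracle. I would run ellipsoid starting with the ball $B(\bo,R)\supseteq\Pc$, and at each iteration maintain an ellipsoid $E_k$ with center $x_k$. If $x_k\notin\Pc$, I use the separation oracle for $\Pc$ (which is trivial here) to get a cutting hyperplane. If $x_k\in\Pc$, I call the $\w$-first-order oracle to obtain an estimate $\est_k$ and an $\w$-subgradient $\hsgr_k$, and cut using the halfspace $\{v:\hsgr_k\cdot(v-x_k)\leq 0\}$. After $N$ iterations, I output the point $x^*\in\{x_k: x_k\in\Pc\}$ with the smallest estimate $\est^*:=\min_k\est_k$.

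The key insight making this work is that an $\w$-subgradient cut at $x_k$ preserves all \emph{sufficiently better} points: the approximate-subgradient inequality $h(v)-h(x_k)\geq\hsgr_k\cdot(v-x_k)-\w h(x_k)$ implies that any $v$ with $h(v)<(1-\w)h(x_k)$ must satisfy $\hsgr_k\cdot(v-x_k)<0$ and hence lies in the preserved halfspace. So if we run long enough for $\mathrm{vol}(E_N)$ to drop below the volume of a small Euclidean ball, some such $v$ must survive every cut. Concretely, let $z^*$ minimize $h$ on $\Pc$; using the given interior ball of radius $V$ inside $\Pc$, I translate $z^*$ toward the deep interior point by a tiny amount, so that a full ball $B^*$ of radius $\rho=\Theta(\eta/K_h)$ around the translated point lies inside $\Pc$ while its center moves by at most $\eta/K_h$, perturbing $h$-values by at most $\eta$ by Lipschitz continuity. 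Choosing $N=\poly\bigl(m,n,\log(\tfrac{K_hR}{V\eta})\bigr)$ so that $\mathrm{vol}(E_N)<\mathrm{vol}(B^*)$, some $v\in B^*$ must survive every ellipsoid cut, and by construction $h(v)\leq\OPT+\eta$.

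Combining these, for every in-$\Pc$ query $x_k$ the survivor $v$ witnesses $h(v)\geq(1-\w)h(x_k)$, hence $h(x_k)\leq\tfrac{\OPT+\eta}{1-\w}$ and $\est_k\leq(1+\w)h(x_k)\leq\tfrac{1+\w}{1-\w}(\OPT+\eta)$. Since $x^*$ achieves the smallest in-$\Pc$ estimate, $h(x^*)\leq\est^*\leq\est_k$ for every in-$\Pc$ query, yielding $h(x^*)\leq\tfrac{1+\w}{1-\w}(\OPT+\eta)$, matching the claim. The iteration count is $\poly\bigl(m,n,\log(\tfrac{K_hR}{V\eta})\bigr)$ because each ellipsoid step shrinks the volume by a constant factor and the target volume $\mathrm{vol}(B^*)$ scales as $\rho^{mn}$.

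The main obstacle, distinguishing this from the textbook exact-oracle proof, is that an $\w$-subgradient cut does \emph{not} necessarily preserve the true optimum $z^*$ itself---it only preserves points whose value is strictly below $(1-\w)h(x_k)$. This is what forces a multiplicative $\tfrac{1+\w}{1-\w}$ loss in lieu of a purely additive $\eta$ loss, and it is the reason the volume argument must set things up with room to spare so that an entire \emph{ball} of near-optimal points survives rather than a single point. A related subtlety is ensuring $B^*\subseteq\Pc$ even when $z^*$ sits on the boundary of $\Pc$; the ``shift toward the interior ball of radius $V$'' step handles this and drives the $\log(R/V)$ dependence in the running time.
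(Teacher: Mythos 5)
The paper does not prove this statement; it is cited verbatim (``paraphrased'') from Lemma~4.5 of Shmoys--Swamy, so there is no internal proof to compare against. Your plan --- run the ellipsoid method with approximate-subgradient cuts at feasible query points, observe that such a cut still preserves any point whose $h$-value is more than a $(1-\w)$-factor below $h(x_k)$, place a full-dimensional ball $B^*$ of near-optimal feasible points inside $\Pc$ by shifting the optimum toward the inscribed ball, invoke the volume shrinkage, and output the in-$\Pc$ query with smallest estimate --- is exactly the right approach, and the final inequality chain $h(x^*)\leq\est^*\leq\est_k\leq(1+\w)h(x_k)\leq\frac{1+\w}{1-\w}(\OPT+\eta)$ is correct.

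There is, however, one point where your prose states the logic backward, and as written the step fails. You say ``some such $v$ must survive every cut'' and then ``for every in-$\Pc$ query $x_k$ the survivor $v$ witnesses $h(v)\geq(1-\w)h(x_k)$.'' If $v$ \emph{survives} the cut at iteration $k$, then $\hsgr_k\cdot(v-x_k)\leq 0$, and the approximate-subgradient inequality gives $h(v)\geq h(x_k)+\hsgr_k\cdot(v-x_k)-\w h(x_k)$, whose right-hand side can be arbitrarily negative --- it yields no useful lower bound on $h(v)$ and hence no upper bound on $h(x_k)$. What the volume argument actually delivers is the opposite: since $\mathrm{vol}(E_N)<\mathrm{vol}(B^*)$, the ball $B^*$ cannot survive all cuts, so \emph{some} $v\in B^*$ is \emph{cut off} at \emph{some} iteration $k$; because $B^*\sse\Pc$, that cut must be an objective cut, so $\hsgr_k\cdot(v-x_k)>0$, which (by the approximate-subgradient inequality) forces $h(v)>(1-\w)h(x_k)$ and hence $h(x_k)<(\OPT+\eta)/(1-\w)$. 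The resulting bound on $\est_k$ then holds only for this particular $k$, not ``for every in-$\Pc$ query,'' but since $x^*$ takes the minimum estimate a single good $k$ suffices. Two smaller imprecisions worth noting but not fatal: the radius of the shifted ball is $\Theta(V\eta/(RK_h))$ rather than $\Theta(\eta/K_h)$ (shifting $z^*$ by $\lambda(c-z^*)$ toward the interior center $c$ moves it by up to $2\lambda R$ while the inscribed ball shrinks to radius $\lambda V$), and an ellipsoid step shrinks volume by $e^{-1/(2(mn+1))}$, not by a dimension-independent constant; both are absorbed by the stated $\poly\bigl(m,n,\log(\frac{K_hR}{V\eta})\bigr)$ bound.
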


To utilize Theorem~\ref{sscvopt} to solve \eqref{eq:cp}, we show how to obtain an
approximate first-order oracle for $g$ given one for $f$. Also, in order to convert the
additive error in Theorem~\ref{sscvopt} (and Theorem~\ref{cvoptexact}) into a multiplicative
guarantee,  
we show how to obtain a lower bound $\lb$ on $\iopt$ such that $K_f/\lb$ is small.

\begin{lemma} \label{gprops}
Let $\alg$ be an $\w$-first-order oracle for $f$ (where $\w<1$). 
\begin{enumerate}[(i), topsep=0ex, itemsep=0ex, labelwidth=\widthof{(iii)}, leftmargin=!]
\item \label{gfo}
We can obtain a $2\w$-first-order oracle for $g$ using $O(1)$ calls to $\alg$. 
\item \label{lbound} Using $\alg$, 
we can efficiently compute $\lb\leq\iopt$, and an upper bound $K_f$ on the
Lipschitz constant of $f$ such that 
$\frac{K_f}{\lb}\leq 2\sqrt{m}$. 
\end{enumerate}
\end{lemma}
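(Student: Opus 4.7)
For part~\ref{gfo}, the key observation is that for any fixed $x$, the inner maximum $\max_{S\sse J:\,|S|=m} f\bigl(\vec{P(x)}_S\bigr)$ is attained at the set $S^*(x)\sse J$ consisting of the indices of the $m$ largest values $P_j(x)$, since for any other $|S|=m$ the sorted vector $\vec{P(x)}_{S^*(x)}$ coordinatewise dominates the sorted $\vec{P(x)}_S$ and $f$ is monotone and symmetric. So I would compute all $L_i(x)$ and $P_j(x)$ exactly, extract $S^*(x)$ by sorting, and call $\alg$ at the two points $\vec{L(x)}$ and $\vec{P(x)}_{S^*(x)}$ to obtain approximate values and $\w$-subgradients $(\est_L,\hsgr^L)$ and $(\est_P,\hsgr^P)$. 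The proposed estimate of $g(x)$ is $\est:=\max\{\est_L,\est_P\}$, which satisfies $g(x)\leq\est\leq(1+\w)g(x)$. For the subgradient I would pick whichever branch's approximate value is larger---say $\est_L\geq\est_P$---and push the $\w$-subgradient $\hsgr^L$ through the linear map $x\mapsto\vec{L(x)}$ via the chain rule, setting $\hsgr^g_{ij}:=\hsgr^L_i\cdot p_{ij}$.

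The delicate step, and the main obstacle in the proof, is verifying that $\hsgr^g$ is a $2\w$-subgradient of $g$ at $x$, because the approximate-subgradient inequality for $f$ at $\vec{L(x)}$ measures its error relative to $f(\vec{L(x)})$ rather than $g(x)$. Combining that inequality with the linearity of $x\mapsto\vec{L(x)}$ and $f(\vec{L(y)})\leq g(y)$ gives
\begin{equation*}
g(y)\ \geq\ (1-\w)\,f\bigl(\vec{L(x)}\bigr)\ +\ \hsgr^g\cdot(y-x).
\end{equation*}
My choice of branch forces $g(x)\leq\est_L\leq(1+\w)f(\vec{L(x)})$, so $(1-\w)f(\vec{L(x)})\geq\tfrac{1-\w}{1+\w}g(x)\geq(1-2\w)g(x)$, and rearranging yields $g(y)-g(x)\geq\hsgr^g\cdot(y-x)-2\w\,g(x)$. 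The $\est_P>\est_L$ case is symmetric, using $\hsgr^g_{ij}:=\hsgr^P_j\cdot p_{ij}$ for $j\in S^*(x)$ and $0$ otherwise.

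For part~\ref{lbound} I would rely on two elementary facts about monotone, symmetric norms. Writing $u:=f(e_1)$ (which equals $f(e_i)$ for every $i$ by symmetry), the triangle inequality and homogeneity give $f(w)\leq\sum_i|w_i|f(e_i)=u\|w\|_1\leq u\sqrt{m}\,\|w\|_2$ for every $w\in\R^m$, so $u\sqrt{m}$ is a valid upper bound on the Lipschitz constant of $f$. Second, $\iopt\geq u$: any optimal schedule $\sg^*$ has positive objective, so some machine $i^*$ receives a job of positive---hence, by integrality of the $p_{ij}$s and the assumption $\iopt\geq 1$, at least unit---load, and applying the paper's monotonicity to $e_{i^*}$ and $\lvec_{\sg^*}$ gives $\iopt=f(\lvec_{\sg^*})\geq f(e_{i^*})=u$. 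A single call to $\alg$ at $e_1$ returns $\hat u\in[u,(1+\w)u]$; setting $K_f:=\hat u\sqrt{m}$ (a valid upper bound since $\hat u\geq u$) and $\lb:=\hat u/(1+\w)$ (a valid lower bound on $\iopt$ since $\hat u/(1+\w)\leq u\leq\iopt$) yields $K_f/\lb=(1+\w)\sqrt{m}<2\sqrt{m}$ as $\w<1$, as desired.
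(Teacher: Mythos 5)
Your proposal is correct and follows essentially the same route as the paper's proof: identifying $S^*(x)$ as the top-$m$ jobs by $P_j(x)$, pulling the $\w$-subgradients of $f$ back through the linear maps $x\mapsto\vec{L(x)}$ and $x\mapsto\vec{P(x)}_{S^*}$, picking the branch with the larger estimate, and losing a factor $\tfrac{1+\w}{1-\w}\geq\tfrac{1}{1-2\w}$ in the subgradient inequality; for part~(ii), both use $f(e_1)\leq\iopt$ and the $\ell_1$-to-$\ell_2$ bound $f(w)\leq f(e_1)\sqrt{m}\|w\|_2$. The only cosmetic difference is that the paper factors the max-of-convex-functions subgradient step into a standalone claim (Claim~\ref{fotrans}), whereas you inline that calculation.
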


\begin{theorem} \label{thm:cpsolve}
Let $\alg$ be an $\w$-first-order oracle for $f$ with $\w\leq\frac{1}{10}$.
Given a \minlb instance with optimum value $\iopt$, there is an algorithm that,
for any $\ve>0$, computes a feasible solution $x^*$ to
\eqref{eq:cp} of objective value $g(x^*)\leq(1+5\w)(1+\ve)\iopt$.
The algorithm runs in $\poly\bigl(\inpsize,\log(\frac{1}{\ve})\bigr)$ time 
and makes $\poly\bigl(\inpsize,\log(\frac{1}{\ve})\bigr)$ 
calls to $\alg$.
\end{theorem}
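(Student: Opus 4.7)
The plan is to put together the machinery developed earlier in \Sect{convsolve}: convert the approximate first-order oracle for $f$ into one for $g$, then plug the reformulation \eqref{newcp} into Theorem~\ref{sscvopt} with carefully chosen parameters.

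First, I would invoke Lemma~\ref{gprops}\ref{lbound} to compute an explicit lower bound $\lb\leq\iopt$ and an upper bound $K_f$ on the Lipschitz constant of $f$ with $K_f/\lb\leq 2\sqrt{m}$; Claim~\ref{glip} then supplies a concrete Lipschitz bound $K=\sqrt{mn}\cdot\max_{i,j}p_{ij}\cdot K_f$ for $g$. Next, Lemma~\ref{gprops}\ref{gfo} converts the given $\w$-first-order oracle $\alg$ for $f$ into a $2\w$-first-order oracle for $g$ at the cost of $O(1)$ calls to $\alg$. The set $\Pc$ from the reformulation \eqref{newcp} is a polytope over which we can trivially separate, with radii $R=\sqrt{mn}$ and $V=0.5/m$ as noted in the text. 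All the hypotheses of Theorem~\ref{sscvopt} with $h=g$ are therefore in place.

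Now I would apply Theorem~\ref{sscvopt} to \eqref{newcp} with additive tolerance $\eta := \ve\cdot\lb$. This produces, in time $\poly\bigl(m,n,\log(\tfrac{KR}{V\eta})\bigr)$ and with $\poly\bigl(m,n,\log(\tfrac{KR}{V\eta})\bigr)$ calls to the oracle for $g$, a point $x^*\in\Pc$ with
\[
g(x^*)\ \leq\ \frac{1+2\w}{1-2\w}\bigl(\OPT+\eta\bigr).
\]
Combining $\OPT\leq\iopt$ (Theorem~\ref{thm:valid}) with $\eta=\ve\lb\leq\ve\iopt$ bounds the parenthesized term by $(1+\ve)\iopt$. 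For the prefactor, since $\w\leq\tfrac{1}{10}$ gives $1-2\w\geq\tfrac{4}{5}$, a one-line estimate yields $\frac{1+2\w}{1-2\w}=1+\frac{4\w}{1-2\w}\leq 1+5\w$. Putting these together delivers $g(x^*)\leq(1+5\w)(1+\ve)\iopt$, which is exactly the promised bound and, in particular, implies $x^*$ is feasible for \eqref{eq:cp} with objective at most this value.

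The only thing left to verify is that the running time collapses to the claimed form. Here I use $K=\sqrt{mn}\cdot\max_{i,j}p_{ij}\cdot K_f$, $R=\sqrt{mn}$, $V=0.5/m$, $\eta=\ve\lb$, and the key bound $K_f/\lb\leq 2\sqrt{m}$ from Lemma~\ref{gprops}\ref{lbound} to obtain
\[
\frac{KR}{V\eta}\ \leq\ \frac{2m^2 n\cdot\max_{i,j}p_{ij}\cdot (K_f/\lb)}{\ve}\ \leq\ \frac{4m^{5/2}n\cdot\max_{i,j}p_{ij}}{\ve},
\]
so $\log(KR/V\eta)=\poly(\inpsize,\log(1/\ve))$ and the overall run time and oracle-call count are $\poly(\inpsize,\log(1/\ve))$, as required. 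There is no genuine obstacle here; the main care is just in choosing $\eta$ proportional to $\lb$ so that the additive error from Theorem~\ref{sscvopt} converts to a multiplicative $(1+\ve)$-loss, and in checking that the $\w\leq\tfrac{1}{10}$ assumption gives the clean $(1+5\w)$ prefactor.
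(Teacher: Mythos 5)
Your proof is correct and follows essentially the same route as the paper's: obtain $\lb$ and $K_f$ from Lemma~\ref{gprops}\ref{lbound}, build the $2\w$-first-order oracle for $g$ via Lemma~\ref{gprops}\ref{gfo}, and invoke Theorem~\ref{sscvopt} with $\eta=\ve\lb$, then combine $\OPT\leq\iopt$, $\lb\leq\iopt$, and $\frac{1+2\w}{1-2\w}\leq 1+5\w$ (valid exactly because $\w\leq\frac{1}{10}$) to get the claimed bound. The only difference is that you spell out the Lipschitz and radius estimates for the running-time claim more explicitly than the paper does, which is a welcome clarification rather than a deviation.
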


\begin{proof}
This follows by combining Theorem~\ref{sscvopt} and Lemma~\ref{gprops}. 
Recall that we are assuming that $\iopt\geq 1$.
By part \ref{gfo} of Lemma~\ref{gprops}, we can compute a $2\w$-first-order oracle for $g$.
We use part \ref{lbound} of Lemma~\ref{gprops} to obtain $\lb$ and $K_f$. Now we apply
Theorem~\ref{sscvopt} to the problem $\min_{x\in\Pc}g(x)$, taking $\eta=\ve\lb$. 
The point $x^*$ returned satisfies 
$g(x^*)\leq\frac{1+2\w}{1-2\w}\cdot(\OPT+\ve\lb)\leq(1+5\w)(1+\ve)\iopt$. 

Recall that $\log(R/V)=\poly(m,n)$. Since we have an upper bound $K$ on the
Lipschitz constant of $g$, where 
$\log K=\poly(\inpsize)\cdot\log K_f$ (Claim~\ref{glip}), the running time and 
number of calls to the first-order oracle for $g$ (and hence $\alg$) is at most
$\poly\bigl(\inpsize,\log(\frac{1}{\ve})\bigr)$. 
\end{proof}

\section{Extensions: multi-norm load balancing and simultaneous approximation} 
\label{multinorm} \label{extn}

\subsection{Multi-norm load balancing}
In the {\em multi-norm load-balancing problem}, 
we are given a load-balancing instance $\bigl(J,m,\{p_{ij}\}_{i\in[m],j\in J}\bigr)$,
multiple monotone, symmetric norms 
$f_1,\ldots,f_k$, and budgets $T_1,\ldots,T_k$ for
these norms respectively. The goal is to find an assignment $\sg:J\to[m]$ such that 
$f_r(\lvec_\sg)\leq T_r$ for all $r\in[k]$.
Our approximation guarantee extends easily to this problem. 

\begin{theorem} \label{mnorm}
Let $\bigl(J,m,\{p_{ij}\}_{i\in[m],j\in J}\bigr)$ be a load-balancing instance. 
Let $f_1,\ldots, f_k$ be $k$ monotone, symmetric norms, with associated budgets $T_1,\ldots, T_k$.
Given an $\w$-first-order oracle for each norm, for any $\ve > 0$, in
$\poly(\inpsize,k,\log(1/\ve))$ time, one can either determine that there is no feasible
solution to the multi-norm load-balancing problem, or return an assignment $\sg:J\to [m]$
such that $f_r(\lvec_\sg) \leq 4(1+7\w)(1+\ve)T_r$ for all $r\in[k]$. 
\end{theorem}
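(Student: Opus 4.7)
The plan is to reduce multi-norm load balancing to a single convex-feasibility problem and invoke the norm-oblivious rounding of Theorem~\ref{thm:rnd}. After dealing with the degenerate cases $T_r = 0$ by preprocessing, I would rescale by replacing $f_r$ with $f_r/T_r$ (the $\omega$-first-order oracle carries over), so that feasibility reduces to asking whether there exists $\sg$ with $f_r(\lvec_\sg)\leq 1$ for all $r\in[k]$. Consider the convex program
\begin{equation*}
G^* \ :=\ \min_{x \in \Pc} G(x), \qquad G(x)\ :=\ \max_{r \in [k]} \max\Bigl\{f_r(\vec{L(x)}),\ \max_{S \sse J : |S| = m} f_r(\vec{P(x)}_S)\Bigr\}.
\end{equation*}
Applying Theorem~\ref{thm:valid} separately to each $f_r$: if the instance is feasible with assignment $\sg^*$, then its indicator solution satisfies $G(x^{\sg^*})\leq 1$, so $G^*\leq 1$.

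Next I would extend Lemma~\ref{gprops}\ref{gfo} to show that an $\omega$-first-order oracle for every $f_r$ yields a $2\omega$-first-order oracle for $G$. The argument is identical: each relevant component function is $f_r$ composed with a linear map, and the inner maximum over $|S|=m$ is attained at the top-$m$ entries of $\vec{P(x)}$ by symmetry and monotonicity of $f_r$; one computes $(1+\omega)$-approximate values for every component, reports the largest, and uses the subgradient of the maximizing component, and a direct check of the approximate-subgradient inequality gives $2\omega$ slack. To keep the Lipschitz constant $K_G$ of $G$ polynomially bounded---which is what drives the ellipsoid running time via $\log(K_G R/V\eta)$---I would first run Theorem~\ref{4apx} on each single-norm instance $f_r$ to obtain a lower bound $\lb_r$ on $\min_{\sg} f_r(\lvec_\sg)$ via Lemma~\ref{gprops}\ref{lbound}; if $\lb_r>T_r$ for some $r$, declare infeasible, and otherwise every $T_r\geq \lb_r$ is polynomially bounded below relative to $K_{f_r}$, yielding $\log K_G = \poly(\inpsize,k)$.

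Now apply Theorem~\ref{sscvopt} to $\min_{x\in\Pc}G(x)$ with additive error $\eta=\ve$; it returns $x^*\in\Pc$ with
\[G(x^*)\ \leq\ \tfrac{1+2\omega}{1-2\omega}\,(G^*+\ve).\]
If $G(x^*) > (1+7\omega)(1+\ve)$, declare the instance infeasible---this is sound because in the feasible case $G^*\leq 1$ gives $G(x^*)\leq \tfrac{1+2\omega}{1-2\omega}(1+\ve)\leq (1+7\omega)(1+\ve)$ for $\omega\leq 1/10$. Otherwise, feed $x^*$ into the norm-oblivious rounding of Theorem~\ref{thm:rnd}; the returned assignment $\sg$ satisfies $\lvec_\sg\leq 2\vec{L(x^*)}+\vec{Z}$ with $\vec{Z}\leq 2\vec{P(x^*)}_S$ for some $|S|=m$, and for every $r\in[k]$, monotonicity, the triangle inequality, and homogeneity of $f_r$ give
\[f_r(\lvec_\sg)\ \leq\ 2 f_r(\vec{L(x^*)}) + 2 f_r(\vec{P(x^*)}_S)\ \leq\ 4\,G(x^*)\,T_r \ \leq\ 4(1+7\omega)(1+\ve)\,T_r,\]
as required. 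The running time inherits an extra factor of $k$ per oracle call (for evaluating all component functions), matching the claimed $\poly(\inpsize,k,\log(1/\ve))$ bound.

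The main obstacle is \emph{not} the rounding---which is unchanged from the single-norm case precisely because Theorem~\ref{thm:rnd} is norm-oblivious---but the technical step of controlling the Lipschitz constants and the ratio $K/V$ uniformly across all $k$ norms simultaneously, which the per-norm preprocessing with lower bounds $\lb_r$ accomplishes.
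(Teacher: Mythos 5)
Your approach is essentially the same as the paper's: recast the feasibility question as minimizing the scale-invariant max-function (your $G$, which is the paper's $\mnp$ in \eqref{mncp} after your $T_r$-rescaling), obtain a $2\w$-first-order oracle and a polynomial Lipschitz bound for it, solve via Theorem~\ref{sscvopt} with $\eta=\ve$, test the resulting objective value, and feed the near-feasible point to the norm-oblivious rounding of Theorem~\ref{thm:rnd}. Two minor points are worth noting. First, you obtain per-norm lower bounds by running the full single-norm pipeline of Theorem~\ref{4apx}; the paper achieves the same end more cheaply by checking $T_r$ against an $\w$-estimate of $f_r(e_1)$ (one oracle call per norm), which already yields $K_{f_r}/T_r=O(\sqrt{m})$ via the argument in Lemma~\ref{gprops}\ref{lbound}. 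Second, your infeasibility test ``if $G(x^*)>(1+7\w)(1+\ve)$'' glosses over the fact that you cannot evaluate $G(x^*)$ exactly---only an estimate $\est$ with $G(x^*)\leq\est\leq(1+2\w)G(x^*)$ is available; running the test on $\est$ introduces an extra $(1+2\w)$ factor in the soundness direction, so the clean algebra $\tfrac{(1+2\w)^2}{1-2\w}\leq 1+7\w$ requires $\w\leq\tfrac{1}{18}$, not $\w\leq\tfrac{1}{10}$ (this is exactly why the paper's Theorem~\ref{mncpsolve} imposes $\w\leq\tfrac{1}{18}$).
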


The convex-programming relaxation for this problem is a variant of \eqref{eq:cp} 
where there is no objective function, and
constraints \eqref{eq:machines}, \eqref{eq:jobs} are replaced with
\begin{equation}
f_r(\vL)\leq T_r, \qquad 
f_r(\vP_S)\leq T_r \quad \forall S\sse J: |S|=m, \qquad \quad \forall r=1,\ldots,k
\label{budget}
\end{equation}
\newcommand{\multicp}{(Multi-CP)\xspace}
Let \multicp denote the resulting feasibility problem: find $(x,\vL,\vP)$
satisfying \eqref{jasgn}--\eqref{jbload}, and \eqref{budget}. 
As noted earlier, the rounding procedure in Section~\ref{rounding} is {\em oblivious} to
the underlying norm, and so our task boils down to finding an (approximately) feasible
solution to \multicp.

In order to 
solve \multicp, as with \eqref{eq:cp}, it will be
convenient to move the nonlinear constraints to the objective and consider the following
reformulation: 
\begin{equation}
\min \ \ \mnp(x)\ :=\ \max\Bigl\{\max_{r\in[k]}\tfrac{f_r(\overrightarrow{L(x)})}{T_r},
\ \ \max_{r\in[k]}\max_{S\sse J:|S|=m}\tfrac{f(\overrightarrow{P(x)}_S)}{T_r}\Bigr\}
\qquad \text{s.t.} \qquad \eqref{jasgn}, \eqref{nonneg}. \tag{MNCP} \label{mncp}
\end{equation}
Observe that finding a feasible solution to \multicp is equivalent to finding a feasible
solution to \eqref{mncp} with objective value at at most $1$.
As before, we may assume that the $p_{ij}$s are integers, and can determine if there is an 
assignment $\sg$ such that $\lvec_\sg=\vec{0}$ (which clearly satisfies \eqref{budget}). 
So assume otherwise. We prove the following. 

\begin{theorem} \label{mncpsolve}
Let $\alg_r$ be an $\w$-first-order oracle for $f_r$ for all $r\in[k]$, where
$\w\leq\frac{1}{18}$. 
For any $\ve>0$, in $\poly\bigl(\inpsize,\log(\frac{1}{\ve})\bigr)$ time and using
$\poly\bigl(\inpsize,\log(\frac{1}{\ve})\bigr)$ calls to each $\alg_r$ oracle, we can
determine that either \multicp is infeasible, or compute $x^*\in\Pc$ such that
$\mnp(x^*)\leq(1+7\w)(1+\ve)$. 
\end{theorem}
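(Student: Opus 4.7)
The plan is to mirror the strategy used for Theorem~\ref{thm:cpsolve}: view \eqref{mncp} as a convex minimization problem over $\Pc$, construct an approximate first-order oracle for $\mnp$ from the oracles $\alg_r$, and invoke Theorem~\ref{sscvopt}. First note that $\mnp$ is convex, since each $f_r$ is convex and the maps $x\mapsto\vec{L(x)}$ and $x\mapsto\vec{P(x)}_S$ are linear in $x$, so $\mnp$ is a pointwise maximum of convex functions of $x$.

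The key step is building a $2\w$-first-order oracle for $\mnp$. By monotonicity and symmetry of $f_r$, the inner maximum $\max_{S\sse J,\,|S|=m}f_r\bigl(\vec{P(x)}_S\bigr)$ is attained when $S=S^*_r(x)$ is the set of the $m$ jobs $j$ with largest $P_j(x)$ values (breaking ties arbitrarily); hence for each $r$ we only need to evaluate and differentiate the two functions $x\mapsto f_r\bigl(\vec{L(x)}\bigr)/T_r$ and $x\mapsto f_r\bigl(\vec{P(x)}_{S^*_r(x)}\bigr)/T_r$, both of which have the form $x\mapsto f_r(A_r x)/T_r$ for a (data-dependent) linear map $A_r$. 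Calling $\alg_r$ on $A_rx/T_r$ (using homogeneity of $f_r$) and composing the returned $\w$-subgradient with $A_r$ via the chain rule yields an $\w$-estimate and $\w$-subgradient for each of these $2k$ inner functions. Arguing as in part~\ref{gfo} of Lemma~\ref{gprops}, taking the largest of the $2k$ estimates as $\est$ and reporting the subgradient of the maximizing inner function gives a $2\w$-first-order oracle for $\mnp$; the extra factor of $2$ reflects that the winning inner function's true value may differ from its reported estimate by a factor of $1+\w$, which is then absorbed into the approximate-subgradient inequality for $\mnp$.

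For a Lipschitz bound, we query each $\alg_r$ at the standard basis vectors $e_1,\dots,e_m\in\R^m$, obtaining estimates $\est_r^i\in[f_r(e_i),(1+\w)f_r(e_i)]$; since $f_r(u)\leq\sum_i|u_i|f_r(e_i)\leq\|u\|_2\sqrt{m}\max_if_r(e_i)$, this yields an explicit bound $K_{f_r}\leq\sqrt{m}\max_i\est_r^i$ of polynomial encoding length. Arguing as in Claim~\ref{glip} then gives $K_\mnp\leq\max_r\sqrt{mn}\,\max_{i,j}p_{ij}\,K_{f_r}/T_r$, with $\log K_\mnp=\poly(\inpsize)$. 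Now apply Theorem~\ref{sscvopt} using the $2\w$-first-order oracle for $\mnp$ and additive accuracy $\eta=\ve$, producing $x^*\in\Pc$ with
\[
\mnp(x^*)\;\leq\;\tfrac{1+2\w}{1-2\w}\bigl(\mnp^*+\ve\bigr),
\]
where $\mnp^*:=\min_{x\in\Pc}\mnp(x)$. If \multicp is feasible then $\mnp^*\leq 1$, and using $\w\leq\tfrac{1}{18}$ (so that $(1+7\w)(1-2\w)\geq 1+2\w$), the bound gives $\mnp(x^*)\leq(1+7\w)(1+\ve)$. The algorithm returns $x^*$ whenever this inequality is witnessed by the first-order oracle (accounting for the $2\w$-slack in the estimate), and otherwise declares infeasibility, which is correct since then necessarily $\mnp^*>1$. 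The claimed $\poly(\inpsize,\log(1/\ve))$ running-time and oracle-call bounds follow from Theorem~\ref{sscvopt} because $\log(K_\mnp R/(V\ve))=\poly(\inpsize,\log(1/\ve))$.

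The main obstacle is the construction and correctness of the $2\w$-first-order oracle for $\mnp$: one must verify the approximate-subgradient inequality for the pointwise-max-over-max structure, and check that reporting the subgradient of the ``argmax'' inner function indeed satisfies it (with the $\w\to 2\w$ blowup). Once that is in hand, the remainder is a direct application of Theorem~\ref{sscvopt} together with the Lipschitz-bounding trick above, which plays the role of part~\ref{lbound} of Lemma~\ref{gprops} in the multi-norm setting.
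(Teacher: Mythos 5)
Your proposal follows essentially the same route as the paper: reduce the inner maximum over $S$ to the single set $S^*$ of the $m$ jobs with largest $P_j(x)$ (the paper isolates this in Lemma~\ref{mnprops}, part (ii), via Claim~\ref{fotrans}); build a $2\w$-first-order oracle for $\mnp$; bound its Lipschitz constant; and invoke Theorem~\ref{sscvopt} with $\eta=\ve$, deciding feasibility by comparing the oracle's estimate of $\mnp(x^*)$ against a threshold. One small inaccuracy: the parenthetical ``(so that $(1+7\w)(1-2\w)\geq 1+2\w$)'' is not the binding constraint --- that inequality only needs $\w\leq 3/14$. The constant $1/18$ enters because the oracle returns an estimate $\est$ with $\mnp(x^*)\leq\est\leq(1+2\w)\mnp(x^*)$, so the certified upper bound on $\mnp(x^*)$ in the feasible case is $\frac{(1+2\w)^2}{1-2\w}(1+\ve)$, and equivalently the infeasibility test is sound, precisely when $\frac{(1+2\w)^2}{1-2\w}\leq 1+7\w$, i.e., $18\w\leq 1$. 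This matches the paper's explicit threshold of $\frac{(1+2\w)^2}{1-2\w}(1+\eta)$.
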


Using Theorem~\ref{mncpsolve}, 
for any $\ve>0$, 
we can determine in time $\poly\bigl(\inpsize,\log(\frac{1}{\ve}\bigr)$ that
\multicp is infeasible, or return a 
fractional assignment $x^*$ satisfying
\[
f_r(\vL[x^*])\leq \kp T_r, \qquad 
f_r(\vP[x^*]_S)\leq \kp T_r \quad \forall S\sse J: |S|=m, \qquad \quad \forall r=1,\ldots,k
\]
where $\kp=(1+7\w)(1+\ve)$.
As noted earlier, the rounding procedure in Section~\ref{rounding} is {\em oblivious} to
the underlying norm, and so by utilizing this to round $x^*$, we obtain an assignment
$\sg$ such that $f_r(\lvec_\sg)\leq 4\kp T_r$ for all $r\in[k]$. This yields
Theorem~\ref{mnorm}. 

\medskip
In the rest of this section, we discuss the proof of Theorem~\ref{mncpsolve}.
If the multi-norm problem is feasible, we must have $T_r\geq f_r(e_1)$ for all $r\in[k]$.  
We assume in the sequel that $T_r$ is at least the estimate of $f_r(e_1)$ returned by
$\alg_r$ scaled by $(1+\w)$, for all $r\in[k]$; if this does not hold, then we declare
infeasibility. Given this, the proof of Lemma~\ref{gprops} \ref{lbound} shows that 
$K_r=(1+\w)\sqrt{m}\cdot T_r$ is an upper bound on the Lipschitz constant of $f_r$, for
all $r\in[k]$. We assume this bound in the sequel. 
Similar to Claim~\ref{glip} and Lemma~\ref{gprops}, we show that the Lipschitz constant of
$\mnp$ can be bounded in terms of the $K_r$s, and we can obtain a $2\w$-first-order oracle
for $\mnp$ using the $\alg_r$ oracles. 

\begin{lemma} \label{mnprops}
(i) The Lipschitz constant of $\mnp$ is bounded by $K=\poly(m,n,\max_{i,j}p_{ij})$.
(ii) We can obtain a $2\w$-first order oracle for $\mnp$ by making $O(1)$ calls to $\alg_r$ for
each $r\in[k]$. 
\end{lemma}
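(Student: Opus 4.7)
The plan is to adapt the arguments of Claim~\ref{glip} and part~\ref{gfo} of Lemma~\ref{gprops} to the multi-norm setting. The key observation driving both parts is that by the monotonicity of each $f_r$, we have $\max_{S\sse J: |S|=m} f_r(\vec{P(x)}_S) = f_r(\vec{P(x)}_{S^*(x)})$, where $S^*(x)$ is the set of the $m$ jobs with the largest $P_j(x)$ values---and crucially $S^*(x)$ depends only on $x$, not on $r$. Thus $\mnp$ can be written as a pointwise maximum of only $2k$ functions once $S^*(x)$ is computed, which lets us handle the inner $S$-maximization with no calls to the first-order oracles.

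For part (i), each function of the form $f_r(\vec{L(x)})/T_r$ or $f_r(\vec{P(x)}_S)/T_r$ is the composition of $f_r/T_r$ with a linear map. Using the Lipschitz bound $K_r\leq (1+\w)\sqrt{m}T_r$ stated before the lemma, the outer function $f_r/T_r$ has Lipschitz constant at most $(1+\w)\sqrt{m}$ (notably, independent of $T_r$). The underlying linear map from $x$ to either $\vec{L(x)}$ or $\vec{P(x)}_S$ has operator norm bounded by its Frobenius norm, which is at most $\sqrt{\sum_{i,j}p_{ij}^2}\leq \sqrt{mn}\cdot\max_{i,j}p_{ij}$. Composing these bounds, and using that the maximum of Lipschitz functions is Lipschitz with the same constant, yields the claimed bound $K=\poly(m,n,\max_{i,j}p_{ij})$---independent of $k$ and the $T_r$s.

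For part (ii), given $x\in\Pc$, first compute $S^*(x)$ in polynomial time by sorting the $P_j(x)$s. Then for each $r\in[k]$ make two calls to $\alg_r$, one at $\vec{L(x)}$ and one at $\vec{P(x)}_{S^*}$, obtaining estimates $\est_{r,\tau}$ and $\w$-subgradients $\hsgr_{r,\tau}$ of $f_r$ for $\tau\in\{L,P\}$. Let $(\hat r,\hat\tau)$ maximize $\est_{\hat r,\hat\tau}/T_{\hat r}$; return this value as the estimate of $\mnp(x)$ together with $M^{\top}\hsgr_{\hat r,\hat\tau}/T_{\hat r}$ as the subgradient, where $M$ is the linear map from $x$ to the input point used for $\alg_{\hat r}$. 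The chain rule extends routinely to approximate subgradients (if $\hsgr$ is an $\w$-subgradient of $f_r$ at $Mx$, then $M^\top\hsgr$ is an $\w$-subgradient of $f_r\circ M$ at $x$). The proof that this yields a $2\w$-first-order oracle mirrors that of part~\ref{gfo} of Lemma~\ref{gprops}: by the maximality of $(\hat r,\hat\tau)$ and the $\alg_r$ guarantees, $\mnp(x)\leq\est_{\hat r,\hat\tau}/T_{\hat r}\leq (1+\w)\mnp(x)$, and the selected constituent $h^*$ satisfies $h^*(x)\geq\mnp(x)/(1+\w)\geq(1-\w)\mnp(x)$; substituting into $h^*$'s $\w$-subgradient inequality at $x$ and using $\mnp(v)\geq h^*(v)$ gives $\mnp(v)-\mnp(x)\geq \hsgr\cdot(v-x)-(2\w-\w^2)\mnp(x)$ for all $v$, yielding the $2\w$-subgradient property. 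I do not anticipate substantive technical obstacles beyond recognizing that monotonicity collapses the exponentially-many inner constraints into a single, oracle-free argmax.
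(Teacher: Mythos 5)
Your proposal is correct and follows essentially the same path as the paper's proof: for part (i) you recover the bound $(1+\w)m\sqrt{n}\max_{i,j}p_{ij}$ by composing the Lipschitz bound $K_r\leq(1+\w)\sqrt{m}T_r$ with the operator norm of the linear maps $x\mapsto\vec{L(x)}$ and $x\mapsto\vec{P(x)}_S$, exactly as the paper does by invoking Claim~\ref{glip} per norm and taking the max; for part (ii) your key observation---that the inner argmax $S^*(x)$ is independent of $r$ so $\mnp$ collapses to a pointwise max of $2k$ functions, each of which admits an $\w$-first-order oracle via one call to $\alg_r$ pushed through the (exact) chain rule for approximate subgradients---is precisely how the paper applies Claim~\ref{fotrans}(ii) after mimicking the proof of Lemma~\ref{gprops}\ref{gfo}. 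Your explicit re-derivation of the $2\w$-subgradient inequality from the maximality of $(\hat r,\hat\tau)$ is the content of Claim~\ref{fotrans}, which the paper cites rather than repeats.
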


\begin{proofof}{Theorem~\ref{mncpsolve}}
	We utilize Lemma~\ref{mnprops} in conjunction with Theorem~\ref{sscvopt}. 
	Part (ii) of Lemma~\ref{mnprops} shows how to obtain a $2\w$-first-order oracle, $\alg$,
	for $\mnp$. So invoking Theorem~\ref{sscvopt} with $\eta=\ve$, and the bound $K$ on the
	Lipschitz constant of $\mnp$ obtained from part (i) of Lemma~\ref{mnprops}, we obtain 
	$\bx\in\Pc$ such that 
	\begin{equation}
	\mnp(\bx)\leq\frac{1+2\w}{1-2\w}\Bigl(\min_{x\in\Pc}\mnp(x)+\eta\Bigr).
	\label{bnd1}
	\end{equation}
	The running time is $\poly\bigl(\inpsize,\log(\frac{1}{\ve})\bigr)$ (since 
	$\log(R/V)$, $\log K=\poly(\inpsize)$), and this is also a bound on the number of calls   
	to the $\alg_r$ oracles. 
	Using $\alg$, we obtain an estimate $\est$ such that
	$\mnp(\bx)\leq\est\leq(1+2\w)\mnp(\bx)$. If $\est>\frac{(1+2\w)^2}{1-2\w}\cdot(1+\eta)$,
	then \eqref{bnd1} implies that $\bigl(\min_{x\in\Pc}\mnp(x)\bigr)>1$, and so \multicp is
	infeasible. 
	Otherwise, taking $x^*=\bx$, we obtain that 
	$\mnp(x^*)\leq\est\leq\frac{(1+2\w)^2}{1-2\w}\cdot(1+\ve)\leq(1+7\w)(1+\ve)$ since
	$\w\leq\frac{1}{18}$.
\end{proofof}

\subsection{Simultaneous approximation}\label{sec:simul}
Given a load-balancing instance $\I=\bigl(J,m,\{p_{ij}\}_{i\in[m],j\in J}\bigr)$, let
$\al^*_\I$ be the smallest $\al$ such that there {\em exists} an assignment $\sg^*$
satisfying $f(\lvec_{\sg^*})\leq\al\bigl(\min_{\sg:J\mapsto[m]}f(\lvec_\sg)\bigr)$ for
{\em every} monotone, symmetric norm.
That is, $\al^*_\I$ is the best {\em simultaneous approximation factor} achievable on
instance $\I$. 
Instead of seeking absolute bounds on $\al^*_\I$ over a class of 
instances~\cite{AlonAWY98,AzarERW04,GoelM06}, as discussed 
in~\cite{ChakrabartyS19}, another pertinent problem is to seek {\em instance-wise}
guarantees:  
given an instance $\I$, we want to find a polytime-computable assignment $\bsg$ 
such that, for some factor $\gm\geq 1$, we have
$f(\lvec_{\bsg})\leq\gm\al^*_\I\bigl(\min_{\sg:J\mapsto[m]}f(\lvec_\sg)\bigr)$ for every
monotone, symmetric norm; i.e., the simultaneous approximation factor of $\bsg$
at most $\gm$ times the best simultaneous approximation factor achievable for
$\I$. 

Our techniques coupled with insights from~\cite{GoelM06,ChakrabartyS19} yields
{\em a $4\bigl(1+O(\ve)\bigr)$-approximation to the best simultaneous approximation factor}, 
in time $\poly\bigl(\inpsize,(\frac{m}{\ve})^{O(1/\ve)}\bigr)$. 
To obtain this guarantee,
following~\cite{GoelM06,ChakrabartyS19}, incurring a $(1+\ve)$-factor loss, it suffices to
obtain a $4$-approximation to the 
best simultaneous-approximation achievable for \topl-norms---%
$\topl(x):=\max_{S\sse[m]:|S|=\ell}\sum_{i\in S}|x_i|$---%
for the $O(\log m)$ indices $\ell$ in 
$\pset:=\bigl\{\min\{\ceil{(1+\ve)^s},m\}: s\geq 0\bigr\}$. If we knew the optimal value
$\optl$ for each such \topl norm, then we can set set a budget $T_\ell=\al\optl$ for each
$\ell\in\pset$, and utilize our result for multi-norm load balancing to do a binary 
search for $\al$. Importantly, notice that the resulting feasibility problem \multicp 
can now be cast as an {\em linear-programming} feasibility problem, since a budget
constraint of the form $\topl(\vec{v})\leq T_\ell$ can be modeled using exponentially
many linear constraints that one can separate over. Thus, this would yield a
$4(1+\ve)$-approximation. 
To make this idea work, we enumerate all choices for the $\optl$ values in powers of
$(1+\ve)$. As argued in~\cite{ChakrabartyS19}, there are at most
$\poly\bigl(\inpsize,(\frac{m}{\ve})^{O(1/\ve)}\bigr)$ candidates to enumerate over, and
this yields the stated guarantee. 

\begin{theorem}\label{thm:simul-opt}
Given a load-balancing instance $\I = \bigl(J,m,\{p_{ij}\}_{i\in[m],j\in J}\bigr)$, let $\alpha^*_I$
be the smallest $\alpha$ such that there is an assignment $\sg^*$ satisfying
$f(\lvec_{\sg^*})\leq\al\bigl(\min_{\sg:J\mapsto[m]}f(\lvec_\sg)\bigr)$ for every
monotone, symmetric norm $f$. 
In
$\poly\bigl(\inpsize,(\frac{m}{\ve})^{O(1/\ve)}\bigr)$ time, we can find an assignment 
$\widehat{\sg}$ such that we have \linebreak $f(\lvec_{\widehat{\sg}})\leq
\bigl(4+O(\ve)\bigr)\al^*_\I\bigl(\min_{\sg:J\mapsto[m]}f(\lvec_\sg)\bigr)$ for every
monotone, symmetric norm $f$.
\end{theorem}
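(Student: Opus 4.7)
The plan is to follow the outline preceding the theorem statement, combining our multi-norm framework (Theorem~\ref{mnorm}) with the simultaneous-approximation reduction of~\cite{GoelM06, ChakrabartyS19}. First I would invoke this reduction, which establishes that, at the cost of an additional $(1+\ve)$-factor loss, it suffices to simultaneously approximate the \topl-norms for only the indices $\ell$ in $\pset=\{\min\{\ceil{(1+\ve)^s},m\}:s\geq 0\}$, a set of size $O(\frac{\log m}{\ve})$; concretely, any assignment $\widehat{\sg}$ with $\topl(\lvec_{\widehat{\sg}})\leq\beta\cdot\optl$ for every $\ell\in\pset$ (where $\optl:=\min_\sg\topl(\lvec_\sg)$) also satisfies $f(\lvec_{\widehat{\sg}})\leq\beta(1+O(\ve))\al^*_\I\cdot\min_\sg f(\lvec_\sg)$ for every monotone, symmetric norm $f$.

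Second, since we do not know the values $\optl$, for each $\ell\in\pset$ I would enumerate guesses $\tz_\ell$ in powers of $(1+\ve)$ over the possible range of load-balancing objective values. Since $|\pset|=O(\frac{\log m}{\ve})$ and each $\tz_\ell$ has $\poly(\inpsize,1/\ve)$ candidate values, the total number of tuples $(\tz_\ell)_{\ell\in\pset}$ is $\poly\bigl(\inpsize,(\frac{m}{\ve})^{O(1/\ve)}\bigr)$, exactly as in~\cite{ChakrabartyS19}. For each such tuple I would binary search for the smallest $\al$ such that \multicp with the norms $\{\topl\}_{\ell\in\pset}$ and budgets $T_\ell=\al\tz_\ell$ is feasible. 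The key point is that each constraint of \multicp is now linear: $\topl(\vec{L})\leq T_\ell$ is equivalent to ``the sum of the $\ell$ largest coordinates of $\vec L$ is at most $T_\ell$'' (and similarly for $\topl(\vec{P}_S)\leq T_\ell$), which admits a polynomial-size LP encoding via standard reformulations, or a polynomial-time separation oracle by sorting. Hence \multicp becomes LP-feasibility, which we can solve exactly, bypassing the $(1+7\w)$-factor overhead of Theorem~\ref{mncpsolve}.

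Third, for the best (smallest-$\al$) candidate tuple I would apply the norm-oblivious rounding of Theorem~\ref{thm:rnd} to the fractional solution, yielding an integral assignment $\widehat{\sg}$ with $\topl(\lvec_{\widehat{\sg}})\leq 4\al\tz_\ell$ for every $\ell\in\pset$. Considering the tuple in which each $\tz_\ell$ rounds $\optl$ up to the nearest power of $(1+\ve)$ shows that the smallest feasible $\al$ on that tuple is at most $(1+\ve)\al^*_\I$ (by definition of $\al^*_\I$), and hence $\topl(\lvec_{\widehat{\sg}})\leq 4(1+\ve)^2\al^*_\I\cdot\optl$ for all $\ell\in\pset$; combining with Step~1 yields the claimed $(4+O(\ve))\al^*_\I$ simultaneous approximation.

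The main obstacle is Step~1—cleanly invoking and verifying the reduction of~\cite{GoelM06,ChakrabartyS19} from arbitrary monotone, symmetric norms to the \topl-norms indexed by $\pset$, together with carefully accounting for how the various $(1+\ve)$-factors from the reduction, the discretization of the guesses, and the binary search over $\al$ compound into only an additive $O(\ve)$ in the final ratio. Once this black-box reduction is in hand, Steps~2 and~3 are relatively routine, and the main technical gain is that the linear structure of \topl-norms lets us sidestep the approximate first-order oracle losses from Theorem~\ref{mncpsolve}.
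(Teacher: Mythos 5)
Your proposal follows the same route as the paper: reduce (via Goel--Meyerson / Chakrabarty--Swamy) to simultaneously approximating the $\topl$ norms for $\ell\in\pset$ with a $(1+O(\ve))$ loss, enumerate the $\optl$ guesses in powers of $(1+\ve)$, binary search for $\al$ by solving \multicp as an exact LP (since the $\topl$ constraints are linear, bypassing the approximate-oracle losses of Theorem~\ref{mncpsolve}), and round with the norm-oblivious Theorem~\ref{thm:rnd}. This matches the paper's argument, including the observation that linearity of $\topl$ is what eliminates the $(1+7\w)$ overhead, so there is nothing further to add.
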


\bibliographystyle{plain}
\bibliography{minnorm-lb}

\appendix

\section{Proofs from Sections~\ref{convsolve} and~\ref{extn}}

\begin{proofof}{Claim~\ref{glip}}
The bound follows easily from the definition of $g$. Let $x,y\in\R^{[m]\times J}$. Let
$\vL, \vec{L'}\in\R^m$ be the load vectors induced by $x, y$ respectively; let $\vP_S$,
$\vec{P'}_S$, be the job-cost vectors for the jobs in $S$ induced by $x, y$ respectively.  
Then, $g(y)-g(x)\leq\max\bigl\{f(L')-f(L),\max_{S\sse J:|S|=m}f(P'_S)-f(P_S)\bigr\}$.
So $g(y)-g(x)\leq K_f\|L'-L\|_2$ or $g(y)-g(x)\leq K_f\|P'_S-P_S\|$ for some $S\sse J$
with $|S|=m$. Let $p_{\max}:=\max_{i,j}p_{ij}$.
In the former case, we have 
$g(y)-g(x)\leq K_fp_{\max}\sum_{i,j}|y_{ij}-x_{ij}|\leq\sqrt{mn}\cdot K_fp_{\max}\|y-x\|_2$; 
the same bound also applies in the latter case. This shows shows that
$K=\sqrt{mn}\cdot K_fp_{\max}$ is a bound on the Lipschitz constant of $g$.
\end{proofof}

The following claim will be useful in proving part~\ref{gfo} of Lemma~\ref{gprops}, as
also part (ii) of Lemma~\ref{mnprops}.

\begin{claim} \label{fotrans}
Let $h:\R^N\mapsto\R$ be defined by $h(x):=\max_{r\in[k]}h_r(x)$, where
$h_r:\R^N\mapsto\R$ is convex for all $r\in[k]$. 
Let $\alg_r$ be an $\w$-first order oracle for $h_r$ for all $r\in[k]$ (where $\w<1$). 
\begin{enumerate}[(i), topsep=0.5ex, itemsep=0ex, labelwidth=\widthof{(ii)}, leftmargin=!]
\item One can obtain a $2\w$-first order oracle for $h$ using $O(1)$ calls to
$\alg_1,\ldots\alg_k$. 

\item More generally, suppose that given $x\in\R^n$, one can identify $I(x)\sse[k]$ such
that $h(x)=\max_{r\in I(x)}h_r(x)$. Then, one can compute a $2\w$-first-order oracle for $h$
that, on input $x\in\R^n$, makes $O(1)$ calls to $\alg_r$ for all $r\in I(x)$.
\end{enumerate}
\end{claim}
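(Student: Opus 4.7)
The plan is to query each oracle $\alg_r$ at the input point $x$ exactly once, obtaining an estimate $\est_r$ with $h_r(x)\leq\est_r\leq(1+\w)h_r(x)$ and an $\w$-subgradient $\hsgr_r$ of $h_r$ at $x$, and then report the pair $(\est_{r^*},\hsgr_{r^*})$ where $r^*\in\arg\max_{r\in[k]}\est_r$ is any index achieving the largest returned estimate. This uses a single call to each $\alg_r$; part (ii) is handled identically by restricting attention to $r\in I(x)$.

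For the estimate bound, let $r^{\mathrm{opt}}\in\arg\max_{r\in[k]}h_r(x)$, so $h_{r^{\mathrm{opt}}}(x)=h(x)$. Then $\est_{r^*}\geq\est_{r^{\mathrm{opt}}}\geq h_{r^{\mathrm{opt}}}(x)=h(x)$, while $\est_{r^*}\leq(1+\w)h_{r^*}(x)\leq(1+\w)h(x)$; so $\est_{r^*}$ is in fact an $\w$-estimate of $h(x)$, which is stronger than required.

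The main step will be verifying the approximate-subgradient inequality for $\hsgr_{r^*}$; the obstacle is that $r^*$ need not equal $r^{\mathrm{opt}}$, so a priori $h_{r^*}(x)$ might be much smaller than $h(x)$, and the $\w$-subgradient property of $\hsgr_{r^*}$ only controls the gap with respect to $h_{r^*}(x)$. The fix is to extract a lower bound on $h_{r^*}(x)$ by combining the two sandwich inequalities above: $\est_{r^*}\leq(1+\w)h_{r^*}(x)$ and $\est_{r^*}\geq h(x)$ together give $h_{r^*}(x)\geq h(x)/(1+\w)\geq(1-\w)h(x)$. Then for any $v\in\R^N$, using $h(v)\geq h_{r^*}(v)$ and the $\w$-subgradient inequality for $\hsgr_{r^*}$ at $x$ for $h_{r^*}$, one gets
\begin{equation*}
h(v)-h(x)\geq h_{r^*}(v)-h(x)\geq\hsgr_{r^*}\cdot(v-x)+(1-\w)h_{r^*}(x)-h(x)\geq\hsgr_{r^*}\cdot(v-x)-2\w h(x),
\end{equation*}
where the last inequality uses $(1-\w)h_{r^*}(x)\geq(1-\w)^2h(x)\geq(1-2\w)h(x)$. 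This exhibits $\hsgr_{r^*}$ as a $2\w$-subgradient of $h$ at $x$, completing part (i). For part (ii), the identical argument applies with $r^*\in\arg\max_{r\in I(x)}\est_r$ and $r^{\mathrm{opt}}\in\arg\max_{r\in I(x)}h_r(x)$: the only facts used are $h_{r^{\mathrm{opt}}}(x)=h(x)$, which holds by hypothesis on $I(x)$, and $h(v)\geq h_{r^*}(v)$, which holds for every index in $[k]$.
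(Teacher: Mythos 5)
Your proposal is correct and essentially identical to the paper's proof: both choose the index with the largest returned estimate, verify the estimate is sandwiched between $h(x)$ and $(1+\w)h(x)$, and then use the estimate to lower-bound $h_{r^*}(x)$ in terms of $h(x)$ before plugging into the approximate-subgradient inequality. The paper works through a slightly tighter chain ending with $\frac{1-\w}{1+\w}\geq 1-2\w$ while you use $(1-\w)^2\geq 1-2\w$, but the argument and its structure are the same, and part (ii) is handled identically by restricting the index set to $I(x)$.
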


\begin{proof}
We focus on proving part (i); part (ii) follows from a very similar argument.
Fix $x\in\R^N$. For every $r\in[k]$, we call $\alg_r$ to obtain an estimate $\est^r$ of
$h_r(x)$. We set the estimate for $h(x)$ to be $\est:=\max_{r\in[k]}\est^r$. From the
properties of $\est^r$, it is easy to see that $h(x)\leq\est\leq(1+\w)h(x)$.

Let $\sgr^r$ be the $\w$-subgradient of $f_r$ at $x$ returned by $\alg_r$.
Let $s\in[k]$ be such that $\est=\est^s$. We set $\mu=\sgr^s$. We now argue that $\mu$ is
a $2\w$-subgradient of $h$ at $x$. Consider any $y\in\R^N$. 
We have
\begin{equation*}
\begin{split}
\mu^T(y-x)&=(y-x)^T\sgr^s\leq h_s(y)-h_s(x)+\w h_s(x)
\leq h(y)-\tfrac{1-\w}{1+\w}\cdot\est^s
= h(y)-\tfrac{1-\w}{1+\w}\cdot\est \\
& \leq h(y)-\tfrac{1-\w}{1+\w}\cdot h(x)\leq h(y)-(1-2\w)h(x).
\end{split}
\end{equation*}
The first two inequalities follow due to the fact that $(\est^s,\sgr^s)$ was returned by
the $\w$-first order oracle for $h_s$;
the next equality follows from the definition of index $s$; and the penultimate inequality 
follows since $\est\geq h(x)$ as established earlier.

\medskip
The proof of the more general statement in (ii) is essentially identical: on input $x$, we
now run $\alg^r$ for all $r\in I(x)$; we set $\est=\max_{r\in I(x)}\est^r$, and $\sgr=\sgr^s$,
where $s\in I(x)$ is an index such that $\est=\est^s$.
\end{proof}

\begin{proofof}{Lemma~\ref{gprops}}
For part~\ref{gfo}, fix $x\in\R^{[m]\times J}$. Recall that 
$P_j=P_j(x):=\sum_i p_{ij}x_{ij}$,
Let $S^*$ be the set of $m$ jobs with the highest $P_j$ values. 
Let $\vL=\vL[x]$ and $\vec{P}_{S^*}=\vec{P(x)}_{S^*}$.
Then, $g(x)=\max\bigl\{f(\vec{L}),f(\vec{P}_{S^*})\bigr\}$.
Observe that $\alg$ can be used to obtain an $\w$-first-order oracle for both
$f\bigl(\vL[x]\bigr)$ and $f\bigl(\vP[x]_{S^*}\bigr)$.
Thus, by using Claim~\ref{fotrans} (ii), we obtain a $2\w$-first-order oracle for $g$
using $O(1)$ calls to $\alg$.

We now justify the observation.
A $(1+\w)$-approximate value oracle is obtained by simply calling $\alg$ to
obtain estimates of $f(\vL)$ and $f(\vP_{S^*})$.
Let $\sgr^L=(\sgr^L_i)_{i\in[m]}$, and $\sgr^P=(\sgr^P_j)_{j\in S^*}$ be the
$\w$-subgradients of $f$ at $\vec{L}$ at $\vec{P}_{S^*}$ respectively returned by $\alg$. 
\[
\text{For all $i\in[m], j\in J$, define} \qquad 
\beta_{ij}=p_{ij}\sgr^L_i, \qquad 
\gm_{ij}=
\begin{cases}p_{ij}\sgr^P_j & \text{if $j\in S^*$}; \\ 0 & \text{otherwise.}\end{cases}
\]
Then, for any $y\in\R^{[m]\times J}$, we have
$\beta^T(y-x)=\sum_{i,j}\sgr^L_ip_{ij}(y_{ij}-x_{ij})=(\vL[y]-\vL[x])^T\sgr^L$
showing that $\beta$ is an $\w$-subgradient of $f\bigl(\vL[\cdot]\bigr)$ at $x$.
Similarly, $\gm^T(y-x)=\bigl(\vP[y]_{S^*}-\vP[x]_{S^*}\bigr)^T\sgr^P$
showing that $\gm$ is an $\w$-subgradient of $f\bigl(\vP[\cdot]\bigr)$ at $x$.

\medskip
For part~\ref{lbound},
Let $\sg^*$ be an optimal assignment. Since we are assuming that $\iopt\geq 1$, 
we have $\load_{\sg^*}(i)\geq 1$ for some $i\in[m]$. Let $e_i\in\R^m$ be the vector with
$1$ in coordinate $i$ and $0$s everywhere else. Then, $\iopt\geq f(e_1)$. Let $\lb$ be
the estimate of $f(e_1)$ obtained by $\alg$ scaled down by $(1+\w)$. So we have
$f(e_1)/(1+\w)\leq\lb\leq\iopt$. 
Consider any $x,y\in\R^m$. We have $y=x+\sum_{i=1}^m(y_i-x_i)e_i$, so by the triangle
inequality and symmetry, we have $|f(y)-f(x)|\leq\sum_{i=1}^m|y_i-x_i|f(e_i)$
Therefore, $|f(y)-f(x)|\leq(1+\w)\lb\sum_{i=1}^m|y_i-x_i|\leq(1+\w)\sqrt{m}\cdot\lb\cdot\|y-x\|$.
So we can set $K_f=(1+\w)\sqrt{m}\cdot\lb$.
\end{proofof}

\begin{proofof}{Lemma~\ref{mnprops}}
Part (i) follows by applying Claim~\ref{glip} to each norm $f_r$, and since the Lipschitz
constant of the maximum of a collection of functions is bounded by the maximum of the
Lipschitz constants of the functions in the collection.
Let $p_{\max}=\max_{i,j}p_{ij}$.
By Claim~\ref{glip}, for each $r\in[k]$, and $S\sse J$ with $|S|=m$, 
both $f_r(\vL[x])/T_r$ and $f_r(\vP[x]_S)/T_r$ have Lipschitz constant at most 
$\sqrt{mn}\cdot p_{\max}\cdot K_r/T_r\leq(1+\w)m\sqrt{n}p_{\max}$. Hence, the Lipschitz
constant of $\mnp$ is at most $K=(1+\w)m\sqrt{n}p_{\max}$.

For part (ii), we mimic the proof of part~\ref{gfo} of Lemma~\ref{gprops}.
Fix $x\in\R^{[m]\times J}$. 
Let $S^*$ be the set of $m$ jobs with the highest $P_j(x)$ values,  
where $P_j(x):=\sum_i p_{ij}x_{ij}$.
Let $\vL=\vL[x]$ and $\vec{P}_{S^*}=\vec{P(x)}_{S^*}$.
Then, 
\[
\mnp(x)=\max\Bigl\{\max_{r\in[k]}f_r(\vec{L})/T_r,\ \ \max_{r\in[k]}f_r(\vec{P}_{S^*})/T_r\Bigr\}.
\]
As in the proof of Lemma~\ref{gprops}~\ref{gfo}, for each $r\in[k]$, we can use $\alg_r$
to obtain an $\w$-first-order oracle for $f_r\bigl(\vL[x]\bigr)/T_r$ and 
$f_r\bigl(\vP[x]_{S^*}\bigr)/T_r$.
Thus, by using Claim~\ref{fotrans} (ii), we obtain a $2\w$-first-order oracle for $\mnp$
using $O(1)$ calls to $\alg_r$, for each $r\in[k]$.
\end{proofof}

\end{document}